\documentclass[format=acmsmall, nonacm=true]{acmart}
\usepackage{acm-ec-24-proc}
\usepackage{booktabs} 
\usepackage[ruled]{algorithm2e} 
\usepackage{natbib} 

\SetAlFnt{\small}
\SetAlCapFnt{\small}
\SetAlCapNameFnt{\small}
\SetAlCapHSkip{0pt}
\IncMargin{-\parindent}

\usepackage{mathtools}
\usepackage{bm}
\usepackage{enumitem}

\usepackage{tikz}
\usetikzlibrary{patterns}
\usetikzlibrary{patterns.meta}

\usepackage[capitalise,nameinlink,noabbrev]{cleveref}

\setcitestyle{authoryear}

\usepackage{multirow}
\usepackage{mdframed}

\theoremstyle{definition}

\theoremstyle{plain}

\usepackage{aliascnt}

\newtheorem{theorem}{Theorem}[section]

\newaliascnt{lemma}{theorem}
\newtheorem{lemma}[lemma]{Lemma}
\aliascntresetthe{lemma}

\newaliascnt{corollary}{theorem}
\newtheorem{corollary}[corollary]{Corollary}
\aliascntresetthe{corollary}

\newaliascnt{proposition}{theorem}

\aliascntresetthe{proposition}

\theoremstyle{remark}

\DeclareMathOperator{\opt}{OPT}

\def\fp/{\textup{\textsf{FP}}}
\def\p/{\textup{\textsf{P}}}
\def\np/{\textup{\textsf{NP}}}
\def\conp/{\textup{\textsf{co-NP}}}
\def\fnp/{\textup{\textsf{FNP}}}
\def\tfnp/{\textup{\textsf{TFNP}}}
\def\ptfnp/{\textup{\textsf{PTFNP}}}
\def\ppa/{\textup{\textsf{PPA}}}
\def\ppad/{\textup{\textsf{PPAD}}}
\def\ppads/{\textup{\textsf{PPADS}}}
\def\ppp/{\textup{\textsf{PPP}}}
\def\pwpp/{\textup{\textsf{PWPP}}}
\def\pls/{\textup{\textsf{PLS}}}
\def\cls/{\textup{\textsf{CLS}}}
\def\ppadpls/{\textup{$\textsf{PPAD} \cap \textsf{PLS}$}}
\def\ppapls/{\textup{$\textsf{PPA} \cap \textsf{PLS}$}}
\def\eopl/{\textup{\textsf{EOPL}}}
\def\sopl/{\textup{\textsf{SOPL}}}
\def\ueopl/{\textup{\textsf{UEOPL}}}
\def\fixp/{\textup{\textsf{FIXP}}}
\def\bu/{\textup{\textsf{BU}}}
\def\bbu/{\textup{\textsf{BBU}}}
\def\linearfixp/{\textup{\textsf{Linear-FIXP}}}
\def\pspace/{\textup{\textsf{PSPACE}}}

\providecommand{\ceil}[1]{\ensuremath{\left \lceil #1 \right \rceil }}

\def\pcircuit{\textup{\textsc{Pure-Circuit}}\xspace}

\newcommand{\garbo}{\ensuremath{\bot}\xspace}

\newcommand{\val}[1]{\boldsymbol{\mathrm{x}}[#1]}
\newcommand{\valonly}{\boldsymbol{\mathrm{x}}}

\newcommand{\PURE}{\textup{\textsf{PURIFY}}\xspace}
\newcommand{\PURIFY}{\PURE}

\newcommand{\NAND}{\textup{\textsf{NAND}}\xspace}
\newcommand{\NOT}{\textup{\textsf{NOT}}\xspace}

\newcommand{\eps}{\ensuremath{\varepsilon}\xspace}

\newcommand{\reals}{\ensuremath{\mathbb{R}}\xspace}

\newcommand{\refg}{\ensuremath{\mathsf{ref}}\xspace}
\newcommand{\refb}{\ensuremath{b_\text{ref}}\xspace}

\newcommand{\hlow}{\ensuremath{H_\text{low}}\xspace}
\newcommand{\hhigh}{\ensuremath{H_\text{high}}\xspace}
\newcommand{\llow}{\ensuremath{L_\text{low}}\xspace}
\newcommand{\lhigh}{\ensuremath{L_\text{high}}\xspace}
\newcommand{\hmax}{\ensuremath{H_\text{max}}\xspace}
\newcommand{\hmin}{\ensuremath{H_\text{min}}\xspace}
\newcommand{\tmin}{\ensuremath{\bar{t}}\xspace}

\newcommand{\pairs}{\ensuremath{w}\xspace}

\newcommand{\aux}{\ensuremath{\mathsf{aux}}\xspace}

\newcommand{\inverter}{\ensuremath{\mathsf{inverter}}\xspace}
\newcommand{\ing}{\ensuremath{\mathsf{in}}\xspace}
\newcommand{\inoneg}{\ensuremath{\mathsf{in1}}\xspace}
\newcommand{\intwog}{\ensuremath{\mathsf{in2}}\xspace}
\newcommand{\outg}{\ensuremath{\mathsf{out}}\xspace}
\newcommand{\outoneg}{\ensuremath{\mathsf{out1}}\xspace}
\newcommand{\outtwog}{\ensuremath{\mathsf{out2}}\xspace}
\newcommand{\midg}{\ensuremath{\mathsf{mid}}\xspace}

\newcommand{\rnot}{\ensuremath{r_\text{\NOT}}\xspace}
\newcommand{\rnand}{\ensuremath{r_\text{\NAND}}\xspace}
\newcommand{\prefmin}{\ensuremath{p^-_\refg}\xspace}
\newcommand{\prefmax}{\ensuremath{p^+_\refg}\xspace}

\ccsdesc[500]{Theory of computation~Problems, reductions and completeness}
\ccsdesc[500]{Theory of computation~Market equilibria}
\ccsdesc[500]{Theory of computation~Exact and approximate computation of equilibria}

\keywords{Fisher markets, competitive equilibrium, PPAD}

\acmYear{2024}\copyrightyear{2024}
\setcopyright{rightsretained}
\acmConference[EC '24]{Conference on Economics and Computation}{July 8--11, 2024}{New Haven, CT, USA}
\acmBooktitle{Conference on Economics and Computation (EC '24), July 8--11, 2024, New Haven, CT, USA}
\acmDOI{10.1145/3670865.3673533}
\acmISBN{979-8-4007-0704-9/24/07}

\title[Constant Inapproximability for Fisher Markets]{Constant Inapproximability for Fisher Markets}

\titlenote{A preliminary version of this work appeared at EC 2024~\cite{DeligkasFHM24-fisher-constant}.}

\author{Argyrios Deligkas}
\orcid{0000-0002-6513-6748}
\affiliation{%
    \institution{Royal Holloway}
    \city{Egham}
    \country{United Kingdom}}
\email{argyrios.deligkas@rhul.ac.uk}

\author{John Fearnley}
\orcid{0000-0003-0791-4342}
\affiliation{%
\institution{University of Liverpool}
\city{Liverpool}
\country{United Kingdom}}
\email{john.fearnley@liverpool.ac.uk}

\author{Alexandros Hollender}
\orcid{0000-0001-5255-9349}
\affiliation{%
\institution{University of Oxford}
\city{Oxford}
\country{United Kingdom}}
\email{alexandros.hollender@cs.ox.ac.uk}

\author{Themistoklis Melissourgos}
\orcid{0000-0002-9867-6257}
\affiliation{%
\institution{University of Essex}
\city{Colchester}
\country{United Kingdom}}
\email{themistoklis.melissourgos@essex.ac.uk}

\renewcommand{\shortauthors}{A. Deligkas, J. Fearnley, A. Hollender, and T. Melissourgos}

\begin{abstract}
We study the problem of computing approximate market equilibria in Fisher
markets with separable piecewise-linear concave (SPLC) utility functions. In
this setting, the problem was only known to be \ppad/-complete for
inverse-polynomial approximations. We strengthen this result by showing
\ppad/-hardness for constant approximations. This means that the problem does
not admit a polynomial time approximation scheme (PTAS) unless $\ppad/ = \p/$.
In fact, we prove that computing any approximation better than $1/11$ is
\ppad/-complete. As a direct byproduct of our main result, we get the same inapproximability
bound for Arrow-Debreu exchange markets with SPLC utility functions.
\end{abstract}

\begin{document}

\maketitle

\pagestyle{plain}
\pagenumbering{arabic}

\section{Introduction}
\label{sec:intro}
Fisher markets~\cite{brainard2005compute-fisher} are one of the
foundational models that have shaped modern economics. In a Fisher market, every
buyer has a fixed budget that they spend on their most favorable bundle of
goods according to their utility function. The appealing fact for this type of
markets is that when the utilities of the buyers satisfy some {\em ``standard
sufficient conditions''}, then a {\em market equilibrium} is {\em guaranteed} to exist.
In a market equilibrium, also known as a \emph{competitive equilibrium}, the prices and the allocation of the goods to the
buyers are such that: (a) every
buyer is allocated goods that maximize their utility, and (b) the market {\em
clears}, i.e., the supply of each good is exactly equal to the demand of that
good.

When the utility functions of the buyers are linear, a market equilibrium can always be computed in polynomial time~\cite{Devanur2008-market_equilibrium,orlin2010improved,vegh2012strongly}. However, the problem becomes intractable as soon as one considers a very slight generalization of linear utilities: {\em additive separable piecewise linear concave} (SPLC) utilities. A buyer with an
SPLC utility function has a piecewise linear concave utility for each good, and
their utility for a bundle of goods is simply the sum of their utilities for each
of the individual goods.

Computing a market equilibrium in Fisher markets with SPLC utilities was shown
to be a \ppad/-complete problem~\cite{vazirani2011market-plc-ppad,chen2009spending}, which
means that the problem is unlikely to admit a polynomial-time algorithm unless
\ppad/ = \p/. This motivates the study of \emph{approximate equilibria} in
which the condition that the market clears is replaced with an {\em approximate
clearing} constraint. In an \eps-approximate market equilibrium, we first
normalize the market so that there is exactly one unit of each good, and then
we seek a price vector such that every good \eps-clears, meaning that the
discrepancy between between the supply and the demand for any good is at most
$\eps$.

\citet{vazirani2011market-plc-ppad} actually showed that it is
\ppad/-complete to find an \eps-approximate market equilibrium when \eps is
inversely polynomial in the size of the market, i.e., the number of buyers and
the number of goods. 
This ruled out fully polynomial-time approximation schemes (FPTAS) for the problem unless $\ppad/ = \p/$. 
However, the existence of a polynomial-time approximation scheme (PTAS) was not
ruled out by this result, and a PTAS may well be good enough to
clear a market for most practical purposes.

\paragraph{\bf Our contribution.}

This paper provides the first \ppad/-completeness for \eps-approximate market
equilibria in Fisher markets for a {\em constant} \eps. 
Hence, a PTAS for Fisher markets cannot exist unless $\ppad/ = \p/$. 

\begin{theorem}\label{thm:main}
It is \ppad/-complete to compute an $\eps$-approximate market equilibrium in Fisher markets with SPLC utilities
for any constant $\eps < 1/11$.
\end{theorem}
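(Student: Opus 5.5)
Membership in \ppad/ is already known: an exact market equilibrium for SPLC Fisher markets can be found in \ppad/ \cite{vazirani2011market-plc-ppad}, and an exact equilibrium is in particular an $\eps$-approximate one for every $\eps \ge 0$. So all the work is in hardness. For this I plan to reduce from \pcircuit, which is \ppad/-hard already with only \NOT, \NAND and \PURIFY gates (or \NOT and \PURIFY with a copy gate). In such an instance each node takes a value in $\{0,1,\garbo\}$, and a gate's constraint is enforced only when its inputs are ``pure''. This robustness is exactly what a constant-approximation reduction needs, because the market can only encode values up to a constant error.

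\emph{Encoding of values.} Each circuit node $v$ is represented by one or two goods. Its value is read off from the price $p_v$ relative to a reference good \refg, whose price should stay in a narrow window $[\prefmin,\prefmax]$. Two thresholds \llow and \hhigh split the price range. If $p_v \le \llow$ the node is read as $0$, if $p_v \ge \hhigh$ it is read as $1$, and any price in between is read as \garbo. The reference good is maintained by a buyer with a large budget who spends on \refg together with a balancing good. Using approximate clearing and budget accounting, this pins $p_\refg$ to a constant window.

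\emph{Gadgets.} For a \NOT gate, the buyer's SPLC utility segments on the input good and on the output good are chosen so that the buyer's spending switches between the two depending on the price ratio. If the input good is cheap, the buyer spends most of its budget on the input, so the output good must become cheap enough to clear. Symmetrically, if the input good is expensive, the buyer shifts spending to the output good, which is over-demanded unless its price rises. Concave segments with steep utility drops cap how much of each good a buyer wants. This guarantees that every \eps-clearing price vector forces the output price into the correct side, provided the gaps between \llow, \hhigh and the $\eps$-perturbations are constants. \NAND is built the same way: a buyer with two input goods, whose spending collapses onto the output good only when both inputs are expensive. \PURIFY uses two output goods with staggered thresholds, so that at least one of them is pushed to a pure side whatever the input price is. For each gadget I will check that the demand discrepancy caused by a wrong output is a constant that exceeds $\eps$. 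Optimising the gadget parameters jointly should give the threshold $1/11$.

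\emph{Main obstacle.} Each good receives spending from several gadgets: as the output of one gadget and as the input of others, with fanout at most two. The constant slack in approximate clearing adds up across these interactions. In addition, the total money in the market is fixed, so if many goods are underpriced the reference price can drift. I plan to handle this by giving each input good an isolated ``copy'' buyer and good, so that every good takes part in at most a constant number of constraints. The approximation error then does not accumulate, and the global prices stay normalized. After that, the remaining work is case analysis plus calibration of the constants, ending with a polynomial-time construction of the market and an efficient map from any \eps-approximate equilibrium back to a \pcircuit solution.
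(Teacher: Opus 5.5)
The gap is in your \NOT and \NAND gadgets, which carry the whole reduction. Your \NOT gadget, as described, sends a cheap input to a cheap output and an expensive input to an expensive output. Under your own encoding (cheap $=0$, expensive $=1$) that is a copy gate. Your \NAND gadget makes the output expensive exactly when both inputs are expensive, which is AND.

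This is not a wording slip. A buyer whose spending ``switches between the two goods depending on the price ratio'' is a pure substitution effect, and substitution can only make the output price \emph{increase} with the input price. Circuits of increasing gates have trivial pure solutions (everything $0$), so they cannot enforce the odd \NOT-cycles that make \pcircuit hard. Markets governed by substitution alone, such as linear Fisher markets, are even solvable in polynomial time. No calibration of constants fixes this.

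The missing idea is an \emph{income effect} obtained from concavity. In the paper the \inverter has utility $a\cdot\min(x,t)$ for \ing with $a\gg s$. So it must first buy a fixed quantity $t$ of the input (an ``anti-endowment''), at cost $t\cdot p_\ing$, out of a budget of $t\cdot\hlow$. The leftover money $t\cdot(\hlow-p_\ing)$ is \emph{decreasing} in $p_\ing$, and it is spent on \outg (utility $s\cdot x$). The inverter's utility $x$ for \refg caps every $p_\outg$ at $H=s\cdot p_\refg$. Auxiliary buyers purchase exactly prescribed amounts, including $t$-unit top-ups for nodes of fan-out below two, and so fix the rest of the demand on \outg. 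This squeezes $p_\outg$ between $\min\bigl(H,(\hlow-p_\ing)\,t/(1-2\tmin-r+\eps)\bigr)$ and $\max\bigl(L,(\hlow-p_\ing)\,t/(1-2t-r-\eps)\bigr)$, which is what makes the gate invert. \NAND works the same way, with budget $2t\cdot\hlow$ and two anti-endowments.

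Even with correct gadgets, two more ingredients are needed to reach $1/11$, and your plan has neither.

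First, the reference price cannot be pinned to a narrow window. Budgets are absolute while quantities clear only up to $\eps$, so the price level is uncertain by a constant factor. The gate inequalities, however, need $\hlow/\hhigh$ close to $1$; for instance, the \NAND analysis needs $t\,(2\hlow/\hhigh-1)$ to exceed about $2/11+\eps$. The paper therefore builds $k=110/\delta$ copies of the circuit, each calibrated to one subinterval $[\hlow,\hhigh]$ of $[s/2,2s]$, and decodes from the copy whose interval contains $H$. Without this the same gadgets give only a much smaller constant, and a ``balancing good'' does not help.

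Second, a single stage with staggered thresholds does not give \PURIFY. For one stage of the inverter gadget, the set of input prices with undetermined output has length comparable to $H$ and ends near $\hlow$ for either choice of $r$, so the two sets overlap. The paper instead uses two chains of $d=2\lceil\log_2(3/\delta)\rceil$ \NOT gates, with $r$ alternating between $0$ and $2/11$ in opposite phase. Each pair of gates expands deviations from a fixed point by a factor above $2$. The relevant fixed points of the two chains differ by at least $\delta\cdot\hlow$, so the undetermined input intervals become disjoint.

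Finally, your separate ``copy buyer'' device against error accumulation is unnecessary. Each gadget is analyzed locally, with the external demand on every good lying in $[2\tmin,2t]$, so errors never compound.
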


We note that our hardness threshold, $1/11$, is relatively large. If we recall
that all goods are normalized to have one unit available, then this result
states that it is \ppad/-complete to find any approximate market equilibrium in
which all of the goods clear to within $9\%$ of their total supply. So this
essentially rules out the existence of polynomial-time algorithms that can
clear a market in practice.

\paragraph{\bf Technical Overview.}
We prove our result via a reduction from the \pcircuit problem that was
recently introduced by \citet{DFHM24}. In this problem we are given a circuit
with \NOT, \NAND, and \PURIFY gates, and we are required to find a satisfying
assignment using the values $\{0, 1, \garbo\}$. The \NOT gate behaves
as usual on the values in $\{0, 1\}$, and has no constraints when the input is
\garbo. The \NAND gate is required to output 0 if both inputs are 1, and is
required to output 1 if at least one input is 0, and otherwise has no
constraints. The \PURIFY gate has one input and two outputs. If the input is in
$\{0, 1\}$, then that value should be copied to both outputs, while if the
input is $\garbo$, then at least one output should be in $\{0, 1\}$. 

The use of the \pcircuit problem as a starting point for our reduction enables
us to give a construction that is arguably considerably simpler than previous
hardness results for Fisher markets. Prior work has used constructions in
which one first sets up a price-regulating market, which essentially gives a
set of goods that are able to encode values within a certain range, and then
adds extra buyers to simulate a \ppad/-hard problem, e.g., finding a Nash
equilibrium of a bimatrix game. 

In contrast, the market that we construct essentially directly implements the
gates of the \pcircuit instance. For each variable in the \pcircuit we
introduce a good, and the price of this good encodes the value of the variable.
Specifically we define a \emph{low price} $L$ and a \emph{high price} $H$ with
$L < H$, and then a price below $L$ encodes a $0$, a price above $H$ encodes a
1, while a price between the two encodes $\garbo$. 


Then we show that the \NOT and \NAND gates can be implemented by relatively
simple gadgets that each contain two buyers. The \PURIFY gate is implemented using two
chains of \NOT gates, where we carefully tweak the parameters of each of the
\NOT gates to ensure that the constraints of the \PURIFY gate are implemented.

\paragraph{\bf Arrow-Debreu exchange markets.}
Finally, we also show that our hardness result for Fisher markets implies a new
hardness result for  Arrow-Debreu exchange markets with SPLC utilities. Specifically, we use a
well-known reduction from Fisher markets to exchange markets that preserves
hardness for $\eps$-equilibria, and we use this to show that computing an
$\eps$-equilibrium in an exchange market is also \ppad/-hard for all $\eps <
1/11$. 

While hardness for Arrow-Debreu exchange markets was already known for constant
$\eps$ \cite{Rubinstein18-Nash-inapproximability}, prior work had only shown
this for some extremely small unknown constant, and so that work did not rule
out the existence of a polynomial-time algorithm that could clear a market in
practice. Our hardness result effectively rules this out unless \ppad/=\p/.

\subsection{Related work}
The problem of computing market equilibria, for both exact and approximate
equilibria, has
received significant attention over the years. For Fisher markets,
\citet{vazirani2011market-plc-ppad} and \citet{chen2009spending} have
established \ppad/-hardness for SPLC utilities albeit for a sub-constant \eps.
On the other hand, polynomial-time algorithms were derived for the cases where
the utility functions of the buyers are
linear~\cite{Devanur2008-market_equilibrium,orlin2010improved,vegh2012strongly},
homogeneous~\cite{eisenberg1961aggregation}, or weak gross
substitutes~\cite{codenotti2005polynomial}.

Furthermore, when the number of goods is constant \citet{kakade2004graphical}
gave a PTAS while~\citet{devanur2008market} gave a polynomial-time algorithm
for exact equilibria. In fact, the algorithm of~\cite{devanur2008market} also works
when the number of buyers is constant in the SPLC utility setting.
For non-separable PLC utilities~\citet{garg2022approximating} derived a fixed
parameter approximation scheme that has the number of buyers as a parameter.

Matching markets is another important subclass of general Fisher markets.
\citet{alaei2017computing} designed a polynomial-time algorithm for markets with a 
constant number of goods or buyers, while~\citet{vazirani2020computational}
derived a polynomial algorithm when the buyers have dichotomous utilities. For
one-sided matching markets, the most famous problem is the Hylland-Zeckhauser
market, for which the existence of an equilibrium was initially established
in~\cite{hylland1979efficient} and was recently simplified by
~\citet{braverman2021optimization}. Even more recently,
\citet{chen2022computational-Hylland-Zeckhauser-PPAD-h} have established
\ppad/-completeness for the problem.

There has also been interest in Fisher markets with
additional constraints. \citet{birnbaum2010new}, \citet{devanur2004spending},
and~\citet{vazirani2010spending} considered the case where the utilities of the
buyers depend on prices of goods through spending constraints.
\citet{jalota2023fisher} considered additional linear constraints that include
matching markets, and they gave a tâtonnement process which was
found to converge to a market equilibrium in experiments.

The Arrow-Debreu {\em exchange} market model~\cite{ArrowD54} is another
foundational class of markets. In this setting, the goods are brought to the
market by the buyers, who then spend the revenue they get from selling their initial
endowments. \ppad/-hardness for Arrow-Debreu market equilibria has been
established for several different
settings~\cite{ChenDDT09-Arrow-Debreu,ChenPY17-non-monotone-markets,codenotti2006leontief,garg2023auction}
and in fact \citet{Rubinstein18-Nash-inapproximability} showed that it is
\ppad/-hard to compute an \eps-equilibrium in Arrow-Debreu markets for a
small unknown constant \eps. On the positive side, there are
several polynomial-time algorithms for linear
utilities~\cite{duan2015combinatorial,duan2016improved,garg2019strongly,jain2007polynomial,ye2008path},
or for buyers with weak gross substitutes utilities~\cite{bei2019ascending,codenotti2005polynomial,garg2023auction}. The various types of market models have also been studied for the setting where the items are chores that provide disutility to the agents~\cite{BranzeiS24-chores,BoodaghiansCM2022-approx-CE-chores,ChaudhuryGMM22-existence-chores,ChaudhuryGMM22-CE-chores}.

The \pcircuit problem was recently introduced in~\cite{DFHM24}, where it was used
to prove strong, improved, \ppad/-hardness results for a variety of problems
related mainly to approximate Nash equilibria. Since then it was further used
in~\cite{deligkas2023tight} to prove tight \ppad/-hardness for approximate Nash
equilibria in graphical games, and in~\cite{ioannidis2023clearing} to prove
improved stronger \ppad/-hardness in the problem of clearing financial
networks. To the best of knowledge, this is the first time that \pcircuit has
been used to prove hardness for market equilibria.

\medskip

\noindent
\textbf{Subsequent work.}
After the publication of this work, a subsequent paper has shown that the
problem remains \ppad/-hard for any $\eps < 1/9$~\cite{DFHM26}. In
addition, that paper also studies the setting in which the optimality of the
bundles received by the buyers is allowed to be relaxed and establishes
conditional lower bounds for this version of the problem.

\section{Preliminaries}

\subsection{Fisher Markets}

\paragraph{\bf Fisher markets.} A Fisher market is given by a tuple $(G,B,(e_i)_{i\in B},(u_i)_{i \in B})$, where:
\begin{itemize}
\item $G$ is a set of (divisible) goods. Without loss of generality, we assume that there is one unit of each good available.\footnote{This can be achieved by a simple normalization, and it simplifies the expression for the clearing constraint below.}
\item $B$ is a set of buyers.
\item For every $i \in B$, $e_i > 0$ is the budget of buyer $i$.
\item For every $i \in B$, $u_i: \mathbb{R}_{\geq 0}^{|G|} \to \mathbb{R}_{\geq 0}$ is the utility function of buyer $i$. For any allocation $x_i \in \mathbb{R}_{\geq 0}^{|G|}$ of goods to buyer $i$ (where $x_{i,j} \geq 0$ denotes the amount of good $j$ allocated to buyer $i$), $u_i(x_i)$ denotes the utility derived by the buyer. We assume that the utility functions are separable piecewise-linear concave (SPLC), meaning that $u_i(x_i)$ can be written as $\sum_{j\in G} u_{i,j}(x_{i,j})$, where each $u_{i,j}: \mathbb{R}_{\geq 0} \to \mathbb{R}_{\geq 0}$ satisfies
\begin{enumerate}
    \item $u_{i,j}(0) = 0$,
    \item $u_{i,j}$ is continuous and piecewise-linear,
    \item $u_{i,j}$ is concave but non-decreasing.
\end{enumerate}
\end{itemize}

\paragraph{\bf Optimal bundles.}
Given a price vector $p \in \mathbb{R}_{\geq 0}^{|G|}$, where $p_j$ denotes the price of good $j$, the set of optimal bundles for buyer $i$, denoted $\opt_i(p) \subseteq \mathbb{R}_{\geq 0}^{|G|}$, is the set of optimal solutions of the following optimization problem:
\begin{equation}\label{eq:opt-bundle}
\begin{split}
\max \quad &u_i(x_i) \\
\text{ s.t.} \quad  
& \sum_{j \in G} p_j x_{i,j} \leq e_i \\
& x_{i,j} \geq 0 \quad \forall j \in G.
\end{split}
\end{equation}
Note that it is possible that $\opt_i(p) = \emptyset$, if some good has price $0$, and the agent is never satiated with this good.

\paragraph{\bf Competitive equilibrium.}
For any $\eps \geq 0$, an $\eps$-approximate market equilibrium is a price vector $p$ and an allocation vector $x = (x_i)_{i \in B}$ satisfying the following conditions:
\begin{enumerate}
    \item For each buyer $i$, $x_i$ is an optimal bundle at prices $p$, i.e., $x_i \in \opt_i(p)$.
    \item For each good $j$, the market clears approximately up to $\eps$ units of good, i.e.,
    $\left| \sum_{i \in B} x_{i,j} - 1 \right| \le \eps$.
\end{enumerate}
When $\eps = 0$, this corresponds to an exact market equilibrium.

\paragraph{\bf Existence of equilibria.}
The following condition is sufficient to guarantee the existence of a market equilibrium \cite{Maxfield1997,vazirani2011market-plc-ppad}:

\begin{description}
    \item[Sufficient Condition:] For every buyer $i \in B$, there exists a good $j \in G$ such that $u_{i,j}$ is a strictly increasing function (i.e., buyer $i$ is never satiated with good $j$).
\end{description}

\paragraph{\bf Computational problem.}
Let $\eps \geq 0$. The computational problem of computing an $\eps$-approximate market equilibrium is defined as follows: 
\begin{mdframed}[backgroundcolor=white!90!gray,
      leftmargin=\dimexpr\leftmargin-20pt\relax,
      innerleftmargin=4pt,
      innertopmargin=1pt,
      skipabove=5pt,skipbelow=5pt]
\begin{description}
    \item[Input:] A Fisher market $(G,B,(e_i)_{i\in B},(u_i)_{i \in B})$ with SPLC utilities satisfying the sufficient condition for the existence of equilibria. For each $i \in B$ and $j \in G$, $u_{i,j}$ is explicitly described in the input, i.e., for each linear affine piece we are given the positions and values at its endpoints.
    \item[Output:] An $\eps$-approximate market equilibrium $(p,x)$.
\end{description}
\end{mdframed}

Given $(p,x)$, the equilibrium conditions can be verified in polynomial time, because, for SPLC utilities, the optimization problem \eqref{eq:opt-bundle} defining $\opt_i(p)$ can be solved in polynomial time using a simple greedy approach;\footnote{In fact, given only prices $p$, it is possible to check in polynomial time whether there exists an allocation $x$ such that $(p,x)$ is an $\eps$-approximate market equilibrium \cite{vazirani2011market-plc-ppad}. So we could have equivalently defined the computational problem to only seek equilibrium prices $p$.} see, e.g., \cite{Garg2015-pivot}. Together with the existence of solutions guaranteed by the sufficient condition, this puts the problem in the complexity class \tfnp/ of total \np/ search problems. Prior work~\cite{vazirani2011market-plc-ppad} has shown that the problem lies in the subclass \ppad/ of \tfnp/, even for $\eps = 0$. In particular, exact rational solutions are guaranteed to exist. The problem is known to be \ppad/-complete for $\eps = 0$, and also when $\eps$ is part of the input and inverse-polynomial with respect to the description of the market~\cite{chen2009spending,vazirani2011market-plc-ppad}. No hardness result is known for any constant $\eps > 0$.

\subsection{The \pcircuit Problem}

\paragraph{\bf The \pcircuit problem.}

\begin{figure*}[t!]
    \begin{center}
	\begin{minipage}{0.27\textwidth}
		\begin{center}
			\begin{tabular}{c||c}
				$u$ & $v$ \\ \hline
				0 & 1 \\
				1 & 0 \\
				$\garbo$ & $\{0, 1, \garbo\}$\\
				\multicolumn{2}{c}{}
			\end{tabular}
			\caption*{\NOT gate}
		\end{center}
	\end{minipage}
    \begin{minipage}{0.35\textwidth}
        \begin{center}
            \begin{tabular}{c|c||c}
                $u$ & $v$ & $w$ \\ \hline
                1 & 1 & 0 \\
                0 & $\{0,1,\garbo\}$ & 1 \\
                $\{0,1,\garbo\}$ & 0 & 1 \\
                \multicolumn{2}{c||}{Else} & $\{0,1,\garbo\}$
            \end{tabular}
			\caption*{\NAND gate}
        \end{center}
    \end{minipage}
	\begin{minipage}{0.35\textwidth}
		\begin{center}
			\begin{tabular}{c||c|c}
				$u$ & \phantom{xx}$v$\phantom{xx}  & $w$ \\ \hline
				$0$ & $0$ & $0$ \\
				$1$ & $1$ & $1$ \\
				\multirow{2}{*}{$\garbo$} & \multicolumn{2}{c}{At least one} \\
				& \multicolumn{2}{c}{output in  $\{0, 1\}$}
			\end{tabular}
			\caption*{\PURE gate}
		\end{center}
	\end{minipage}
    \end{center}
	\caption{The truth tables of the three gates of \pcircuit.}
	\label{fig:gates}
\end{figure*}

We will show our hardness result by reducing from the \pcircuit problem, which
is known to be \ppad/-complete~\cite{DFHM24}. 
An instance of the \pcircuit problem is given by a node set $V=[n]$ and a set
$C$ of gate-constraints (or just \emph{gates}). Each gate $g \in C$ is of the
form $g = (T,u,v,w)$ where $u,v,w \in V$ are distinct nodes, and $T \in \{\NOT, \NAND, \PURE\}$ is the type of the gate, with the following interpretation.
\begin{itemize}
	\item If $T=\NOT$, then $u$ is the input of the gate, and $v$ is its output. ($w$ is unused)
	\item If $T=\NAND$, then $u$ and $v$ are the inputs of the gate, and $w$ is its output.
	\item If $T=\PURE$, then $u$ is the input of the gate, and $v$ and $w$ are its outputs.
\end{itemize}
We require that each node is the output of exactly one gate.

A solution to instance $(V,C)$ is an assignment $\valonly: V \to \{0,1,\garbo\}$ that satisfies all the gates (see Fig.~\ref{fig:gates}), i.e., for each gate $g=(T,u,v,w) \in C$ we have the following. 
\begin{itemize}
	\item If $T=\NOT$ in $g=(T,u,v)$, then $\valonly$ satisfies
	\begin{align*}
		&\val{u} = 0 \implies \val{v} = 1\\
		&\val{u} = 1 \implies \val{v} = 0.
	\end{align*}

    \item If $T=\NAND$ in $g = (T,u,v,w)$, then $\valonly$ satisfies
    \begin{align*}
        \val{u} = \val{v} = 1 \implies \val{w} = 0\\
        (\val{u} = 0) \lor (\val{v} = 0) \implies \val{w} = 1
    \end{align*}

	\item If $T=\PURE$, then $\valonly$ satisfies
	\begin{align*}
		& \{\val{v}, \val{w}\} \cap \{0,1\} \neq \emptyset\\
		& \val{u} \in \{0,1\} \implies \val{v} = \val{w} = \val{u}.
	\end{align*}
\end{itemize}
 
The structure of a \pcircuit instance is captured by its \emph{interaction graph}. This graph is constructed on the vertex set $V = [n]$ by adding a directed edge from node $u$ to node $v$ whenever $v$ is the output of a gate with input $u$. The total degree of a node is the sum of its in- and out-degrees.

\begin{theorem}[\cite{DFHM24}]
	\label{thm:pancircuit}
	\pcircuit is \ppad/-complete, even when every node of the interaction graph has in- and out-degree at most 2 and total degree at most 3.
\end{theorem}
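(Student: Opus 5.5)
Since \cref{thm:pancircuit} is quoted from \cite{DFHM24}, I only outline how I would reprove it. It has two parts: membership in \ppad/, and hardness with the degree bounds.

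\textbf{Membership.} I would embed \pcircuit into a piecewise-linear fixed-point problem, using the fact that exact fixed points of piecewise-linear arithmetic circuits can be found in \ppad/ (\linearfixp/ $=$ \ppad/). Each node $u$ gets a variable $y_u \in [0,1]$, and each gate becomes a continuous piecewise-linear map on its output variables:
\begin{itemize}
\item for \NOT, set $y_v \gets 1-y_u$;
\item for \NAND, set $y_w \gets 1-\min(y_u,y_v)$;
\item for \PURE, set $y_v \gets \min(1,2y_u)$ and $y_w \gets \max(0,2y_u-1)$.
\end{itemize}
Every node is the output of exactly one gate, so this defines a map $[0,1]^n \to [0,1]^n$. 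Take an exact fixed point and round: $0 \mapsto 0$, $1 \mapsto 1$, and every value in $(0,1)$ maps to \garbo.

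Each gate is then satisfied:
\begin{itemize}
\item \NOT maps $0$ to $1$ and $1$ to $0$.
\item \NAND outputs $0$ exactly when both inputs equal $1$, and outputs $1$ as soon as one input is $0$.
\item \PURE copies $0$ and $1$ to both outputs. If $y_u \in (0,1/2]$ then $y_w = 0$, and if $y_u \in [1/2,1)$ then $y_v = 1$, so at least one output is pure.
\end{itemize}

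\textbf{Hardness.} I would reduce from \textsc{End-of-Line} through a discretized Brouwer construction, in the spirit of the generalized-circuit reductions, but now working entirely with Boolean logic plus \garbo.
\begin{enumerate}
\item Represent a point of the discretized domain by blocks of bits.
\item Apply a tree of \PURE gates to each bit so that, among many copies, all but a bounded number are pure.
\item Feed every copy through a Boolean circuit built from \NAND and \NOT gates that computes the \textsc{End-of-Line} displacement.
\item Feed the results back so that they update the encoded point.
\end{enumerate}
The argument that a solution yields an \textsc{End-of-Line} solution would average over the copies. Since only boundedly many copies can be \garbo, only boundedly many evaluations are corrupted. Consistency of the feedback then forces the encoded point to lie near a point where the displacement "vanishes", and such a point can only be a genuine endpoint of the line.

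\textbf{Degree bounds.} The last step is local surgery on the hard instance. Whenever a node feeds $k>2$ gates, I would replace its fan-out by a binary tree of copy gadgets. Two natural gadgets are a \PURE gate, which copies pure values to both outputs, and a double \NOT chain, which copies pure values and relays \garbo. I would interleave them so that every node has in-degree at most 2, out-degree at most 2, and total degree at most 3; double-\NOT relays also separate high-degree nodes.

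I would then check that solutions map back. When the source is pure, all copies equal it. When the source is \garbo, the copies may take arbitrary values. This is harmless for gate correctness, because a gate receiving \garbo was unconstrained in that input anyway, so any mix of values on the copies is consistent with a legal downstream behaviour.

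\textbf{Main obstacle.} The hard step is the core hardness argument. The difficulty is designing the purification and averaging so that a bounded number of \garbo copies cannot fake a fixed point. I would first try the approach of \cite{DFHM24}: a purification depth logarithmic in the number of copies, combined with a Hirsch–Papadimitriou–Vavasis style grid embedding of \textsc{End-of-Line}, where the displacement is robust to a constant fraction of corrupted evaluations.
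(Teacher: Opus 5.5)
The paper does not prove this theorem; it imports it from \cite{DFHM24}, so your outline has to stand on its own. Two of its three parts do.

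\textbf{Membership.} Your argument is correct. The maps $1-y_u$, $1-\min(y_u,y_v)$ and $(\min(1,2y_u),\max(0,2y_u-1))$ define a piecewise-linear self-map of $[0,1]^n$. An exact fixed point can be found in \textsf{Linear-FIXP} $=$ \textsf{PPAD}, and rounding it satisfies every gate, as you check.

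\textbf{Degree reduction.} The surgery works, but your justification needs fixing. It is not true that ``a gate receiving \garbo was unconstrained in that input'': a \PURE gate with input \garbo still needs a pure output, and a \NAND gate whose other input is $0$ must still output $1$. The correct justification is monotonicity. For each gate type, with the other inputs fixed, the set of outputs allowed when an input is \garbo contains the set allowed for every other value of that input. So copies of a \garbo source may be arbitrary. Also, the surgery must be triggered by degrees, not by the number of gates a node feeds. A node feeding a single \PURE gate already has out-degree $2$, so without a relay it can neither feed anything else nor be a \NAND output.

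\textbf{The gap: hardness.} The hardness direction is the actual content of the theorem, and here there is a genuine gap. As written, your four steps do not define a reduction, and the analysis you sketch does not apply to them.

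First, \pcircuit has no gate that adds or averages. So ``average over the copies'' and ``feed the results back so that they update the encoded point'' have no implementation. For a point stored as pure bits, the claim that it lies ``near'' a zero of the displacement has no meaning. You never specify the Boolean feedback map, let alone show that its \garbo-consistent solutions encode \textsc{End-of-Line} solutions.

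Second, purifying each bit separately does leave at most one \garbo leaf per tree. But the pure leaves of a \garbo bit can be an arbitrary mix of $0$s and $1$s. With a binary encoding, the assembled copies then evaluate grid points that differ in arbitrary, possibly high-order, bits, i.e.\ points that are far apart. Nothing bounds how many point bits are \garbo.

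So ``only boundedly many evaluations are corrupted'' does not give the locality a Brouwer/HPV-style argument needs:
\begin{itemize}
\item uncorrupted copies can report unrelated displacements;
\item a few \garbo copies can then leave the aggregate undetermined;
\item nothing in your outline stops a configuration with \garbo point bits from being self-consistent away from any endpoint.
\end{itemize}
Ruling out such fake fixed points is exactly the obstacle you name and then defer to \cite{DFHM24}. To close the gap you need a concrete mechanism. One option is an encoding in which \garbo can arise only between adjacent positions, so that purified samples stay local. You then need an explicit feedback circuit and a proof that its \garbo-consistent fixed points are \textsc{End-of-Line} solutions. Until then, the hardness half is assumed rather than proved.
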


\section{Construction of the Market}

Given an instance $(V, C)$ of the \pcircuit problem, we will construct, in
polynomial time, a Fisher market that simulates that instance. In this section, we
describe how the Fisher market will be constructed. 

\paragraph{\bf The reference good.}

To implement our construction, we need a specific good that, in every
equilibrium, has a price that is close to $1$. We call this good the
\emph{reference good}, and denote it as \refg.

To ensure that the price of the reference good is close to $1$, we use a
\emph{reference buyer} called \refb who has a budget of $e_{\refb} = 1$, and the following
utility function.
\begin{equation*}
u_{\refb, j}(x) = \begin{cases}
x & \text{if $j = \refg$,} \\
0 & \text{otherwise.}
\end{cases}
\end{equation*}
In other words, the reference buyer desires only the reference good, and will
therefore spend all of their money on it. 

We will ensure that the total amount of demand on \refg from all other buyers
in the construction will be significantly smaller than $1$. This will ensure
that in any approximate equilibrium, the price of \refg will be close to $1$. 

\paragraph{\bf Variable encodings.}

For each variable in the \pcircuit instance, we introduce a good that will
encode that variable. The value assigned to each variable will be determined by
the price of the corresponding good. 

Let $s, a \in \reals$ be two positive constants that
will be fixed later. Given a price vector $p$, we define a \emph{high price}
$H = s \cdot p_{\refg}$, and a \emph{low price} $L = s \cdot H / a = s^2 \cdot
p_{\refg} / a$. The idea is that we will fix $s < 1$, while $a$ will be chosen
to be very large. Thus, the high price is a specified fraction of the reference
price, while the low price is very close to zero. 

Given a price vector $p$ for the Fisher market, we extract an assignment to the variables of the \pcircuit instance in the following way.
\begin{itemize}
\item If $p_v \ge H$ then $v = 1$.
\item If $p_v \le L$ then $v = 0$.
\item Otherwise $v = \garbo$. 
\end{itemize}

Note that the values of $H$ and $L$ depend on the price of the reference good.
Although we know that the price of the reference good will be close to $1$, it
will not be exactly $1$, and thus $H$ and $L$ will vary according to the
particular price that is chosen for the reference good. We will use \hlow and
\hhigh to denote a lower and upper bound for $H$, and we will give exact values
for these bounds later. We likewise use \llow and \lhigh to give bounds for
$L$.

\paragraph{\bf Auxiliary buyers.}

The construction will use many \emph{auxiliary buyers}, whose purpose is to buy
a pre-specified amount of a particular good. 
Given a good $j \in G$ and an amount $r \in [0, 1]$ we define the buyer
$b = \aux(j, r)$ in the following way.
\begin{itemize}
\item The buyer's budget is $e_b = r \cdot \hhigh$.
\item The buyer's utility for good $j$ is defined to be
\begin{equation*}
u_{b, j}(x) = \begin{cases}
2 \cdot s \cdot x & \text{if $x \le r$,} \\
2 \cdot s \cdot r & \text{otherwise.}
\end{cases}
\end{equation*}
\item The buyer's utility for \refg is $u_{b, \refg}(x) = x$.
\item The buyer's utility for all other goods is zero.
\end{itemize}

We will ensure that the price of good $j$ is no larger than \hhigh, which means
that the auxiliary buyer always has enough money to buy $r$ units of 
good~$j$. The utilities have been chosen to ensure that, if the good's price
is no larger than \hhigh, the auxiliary buyer always strictly prefers to buy $r$ units
of good $j$ before buying the reference good. Moreover, once $r$ units of good
$j$ have
been bought by the auxiliary buyer, their marginal utility for good $j$ becomes
$0$, whereas they always have positive marginal utility towards the reference
good.
Thus, as we will later
show formally, in any approximate equilibrium the auxiliary buyer will buy
exactly $r$ units of good $j$, and spend the rest of their money on the
reference good. 

\paragraph{\bf The interface between variables.}

As mentioned previously, for each variable in the \pcircuit instance, there is
a corresponding good whose price encodes the value of that variable. To
simulate the gates of the \pcircuit instance, we will use buyers that buy a
certain proportion of the variable-encoding goods. 

To ensure that there is a consistent interface between the gates, we introduce
a parameter $t \in (0, 0.5)$ with the intention that, if a gate $g$ uses variable
$v$ as an input variable, then the buyers that implement $g$ will buy $t$ units
of the good that represents $v$. 

From Theorem~\ref{thm:pancircuit} we have that the \pcircuit instance has
out-degree at most two, so for each variable-encoding good, we expect at most $2t$
units of the good to be bought by the gates that take this variable as an
input. Note however that some goods may not be used as an input to exactly two
gates. To address this, for each variable-encoding good $j$ that is used as an input to 
only one gate, we introduce a buyer $\aux(j, t)$, who will buy $t$ units of that
good to top-up the amount that is bought. 
In the case where $j$ is used as an input to no other gate, we introduce two such buyers instead.

In a perfect world, this would ensure that exactly $2t$ units of each
variable-encoding good are bought by the gates for which that good is an input.
Unfortunately, as we will later show formally, the gates may actually buy
slightly less than $t$ units each, and this will be dealt with in our proofs.

\paragraph{\bf \NOT gates.}

We begin by building a \NOT gate. The \NOT gate is particularly important
because we will also use it to implement the \PURIFY gate. 
For this reason, we implement a parameterized gadget $\NOT(\ing, \outg, r)$,
where $\ing$ is the input good to the \NOT gate, \outg is the output good, and
$r \in [0,1]$ is a constant. When the gadget is used to implement a \NOT gate
in the \pcircuit instance, we will set $r = \rnot$, where $\rnot \in [0,1]$ is
a fixed parameter.

The gadget will have a buyer named the inverter,
and also an auxiliary buyer. The interaction between the buyers and the goods
is shown below, where an arrow indicates that the buyer has a non-zero utility
function for that good.

\begin{center}
\begin{tikzpicture}

\node[circle,draw,thick,minimum size=1.3cm] (gout) at (0, 0) {\outg};
\node[circle,draw,thick,minimum size=1.3cm] (gin) at (6, 0) {\ing};
\node[circle,draw,thick,minimum size=1.3cm] (ref) at (3, 2.3) {$\refg$};

\node[draw, inner sep=0.35cm,thick] (buyer) at (3, 0) {\inverter};
\node[draw, inner sep=0.35cm,thick] (extra) at (3, -2) {$\aux(\outg, r)$};

\path[draw,thick,->] 
    (buyer) edge (gout)
    (buyer) edge (gin)
    (buyer) edge (ref)
    (extra) edge (gout)
    (extra) edge [bend right=65] (ref)
    ;
\end{tikzpicture}
\end{center}

The inverter is specified as follows.
\begin{itemize}
\item The budget of the inverter is $e_{\inverter} = t \cdot \hlow$.
\item The inverter's utility for \ing is
\begin{equation*}
u_{\inverter,\ing}(x) = \begin{cases}
a \cdot x & \text{if $x \le t$,} \\
a \cdot t & \text{otherwise.}
\end{cases}
\end{equation*}
\item The inverter's utility for \outg is $u_{\inverter,\outg}(x) = s \cdot x$.
\item The inverter's utility for \refg is $u_{\inverter,\refg}(x) = x$.
\item The inverter's utility for all other goods is zero.
\end{itemize}

Recall that we intend to set $s < 1$ and $a$ to be very large. Thus, the
inverter buyer is heavily incentivized to buy $t$ units of the input good
before buying anything else. However, once $t$ units of the input good have
been bought, the inverter no longer has any interest in buying more units.
We call $t$ the \emph{anti-endowment} for the inverter: that buyer
must\footnote{Actually, if the price of the output good is very close to zero, then the
inverter may prefer to buy the output good before the input good. We will deal
with this case separately in our proofs.}
buy a specific amount of the good before being able to spend money elsewhere.

The high-level idea is that if the input variable is $1$, then $p_\ing
\ge H \ge \hlow$. Since we have fixed the budget of the inverter to be
$e_{\inverter} = t \cdot \hlow$, this means that the inverter will
spend all of their budget buying their anti-endowment, and will have no money
left to spend on the output good. Thus, the demand on the output good will be
low, and so its price cannot be high.

On the other hand, if the input variable is $0$, then $p_\ing \le L = s
\cdot H/a$. Since we will set $a$ to be very large, this means that the
inverter spends almost no money buying their anti-endowment, and so has
essentially their entire budget left over. In this scenario, the inverter will
have a large amount of money to spend on the output good, which will cause it to have a
high price.

The utility of the inverter towards the reference good serves to ensure that
the output good's price cannot rise above $H$. Since $H = s \cdot p_\refg$, if
$p_\outg > H$ then the marginal utility of buying the output good would be
strictly less than $s / (s \cdot p_\refg) = 1/p_\refg$, whereas the marginal utility of buying the reference good is $1/p_\refg$, and so in this scenario the
inverter would spend all of their remaining money on the reference good, and no
money on the output good. We will ensure that the output good fails to clear in
this scenario, which will ensure that the price of the output good is capped at
$H$ in any approximate equilibrium.

The auxiliary buyer allows us to change how much of the output good can be
bought by the inverter, since we can adjust the $r$ parameter to
change how much of the output is taken by the auxiliary buyer. The high-level
idea here is that by changing $r$, we can change how the output good's price
changes relative to the inverter's remaining budget after buying the
anti-endowment. This will be used critically when we use \NOT gates to
implement the \PURIFY gate.

\paragraph{\bf \NAND gates.}

For each \NAND gate with inputs  \inoneg and \intwog, and output \outg, we use
the following construction.

\begin{center}
\begin{tikzpicture}

\node[circle,draw,thick,minimum size=1.3cm] (gout) at (0, 0) {\outg};
\node[circle,draw,thick,minimum size=1.3cm] (gin1) at (6, 1.5) {\inoneg};
\node[circle,draw,thick,minimum size=1.3cm] (gin2) at (6, -1.5) {\intwog};
\node[circle,draw,thick,minimum size=1.3cm] (ref) at (3, 2.3) {$\refg$};

\node[draw, inner sep=0.35cm,thick] (buyer) at (3, 0) {\inverter};
\node[draw, inner sep=0.35cm,thick] (extra) at (3, -2) {$\aux(\outg, \rnand)$};

\path[draw,thick,->] 
    (buyer) edge (gout)
    (buyer) edge (gin1)
    (buyer) edge (gin2)
    (buyer) edge (ref)
    (extra) edge (gout)
    (extra) edge [bend right=65] (ref)
    ;
\end{tikzpicture}
\end{center}

The construction uses two buyers. The auxiliary buyer has a parameter \rnand
that will be fixed later. 
The \inverter buyer is slightly different than that of the \NOT gate and is specified as follows.

\begin{itemize}
\item The budget of the inverter is $e_{\inverter} = 2 t \cdot \hlow$.
\item The inverter's utility for good $j \in \{\inoneg, \intwog\}$ is
\begin{equation*}
u_{\inverter,j}(x) = \begin{cases}
a \cdot x & \text{if $x \le t$,} \\
a \cdot t & \text{otherwise.}
\end{cases}
\end{equation*}
\item The inverter's utility for \outg is $u_{\inverter,\outg}(x) = s \cdot x$.
\item The inverter's utility for \refg is $u_{\inverter,\refg}(x) = x$.
\item The inverter's utility for all other goods is zero.
\end{itemize}

This is a straightforward generalization of the \NOT gate to two inputs. The
inverter now has a budget of $2 t \cdot \hlow$ and an anti-endowment of $t$
units for both of the input goods. So if both inputs have a high price, the
inverter will spend all their money on the input goods, and so they will not be
able to increase the price of the output good. If either of the two inputs has
a low price, then the inverter will have money left over, and will be able to
push the price of the output good higher.

\paragraph{\bf \PURIFY gates.}

For a \PURIFY gate with input \ing and outputs \outoneg and \outtwog, we use
the following construction.

\begin{center}
\resizebox{\textwidth}{!}{
\begin{tikzpicture}

\node[circle,draw,thick,minimum size=1.3cm] (gin) at (6, 0) {\ing};

\node[draw, inner sep=0.35cm,thick] (buyer11) at (3, 1) {$\NOT(r^1_1)$};
\node[draw, inner sep=0.35cm,thick] (buyer21) at (3, -1) {$\NOT(r^2_1)$};

\node[circle,draw,thick,minimum size=1.3cm] (g11) at (0, 1) {$g^1_1$};
\node[circle,draw,thick,minimum size=1.3cm] (g21) at (0, -1) {$g^2_1$};

\node[draw, inner sep=0.35cm,thick] (buyer12) at (-3, 1) {$\NOT(r^1_2)$};
\node[draw, inner sep=0.35cm,thick] (buyer22) at (-3, -1) {$\NOT(r^2_2)$};

\node[circle,draw,thick,minimum size=1.3cm] (g12) at (-6, 1) {$g^1_2$};
\node[circle,draw,thick,minimum size=1.3cm] (g22) at (-6, -1) {$g^2_2$};

\node[circle,thick,minimum size=1.3cm] (dots1) at (-8, 1) {$\cdots$};
\node[circle,thick,minimum size=1.3cm] (dots2) at (-8, -1) {$\cdots$};

\node[circle,draw,thick,minimum size=1.3cm] (out1) at (-10, 1) {$\outoneg$};
\node[circle,draw,thick,minimum size=1.3cm] (out2) at (-10, -1) {$\outtwog$};

\path[draw,thick,->] 
    (buyer11) edge (gin)
    (buyer21) edge (gin)

    (buyer11) edge (g11)
    (buyer21) edge (g21)

    (buyer12) edge (g11)
    (buyer22) edge (g21)

    (buyer12) edge (g12)
    (buyer22) edge (g22)

    (dots1) edge (g12)
    (dots2) edge (g22)

    (dots1) edge (out1)
    (dots2) edge (out2)
    ;
\end{tikzpicture}
}
\end{center}

We use two chains of $d$ \NOT gates to compute the two outputs,
where $d$ is a parameter. To do this, we introduce intermediate goods $\{
g^1_j, g^2_j \; : \; 1 \le j \leq d-1 \ \}$. To simplify the definition, we use
$g^1_0 = g^2_0 = \ing$, and we use $g^1_d = \outoneg$ and $g^2_d = \outtwog$.
Then for each $i \in \{1, 2\}$, and for each $j \in \{1, 2, \dots, d\}$ we
include a gadget $\NOT(g^i_{j-1}, \; g^i_j, \; r^i_j)$ where $r^i_j \in \reals$
is a parameter that we will fix later. 

We treat each of the \NOT gates and each of the intermediate goods as full
gates and variables in our instance. This means that each intermediate
good $j$ will also have an auxiliary buyer $\aux(j, t)$ that buys $t$ units of
the good, to compensate for the fact that the good is only used as an input to
one other gate.

The idea is that we will set $d$ to be some large even number. Therefore each
chain of \NOT gates will have even length. So if the input variable is a $1$ or
$0$, both chains of \NOT gates will output that value, as required by the
\PURIFY constraints. If the input good has a price that is strictly between $L$
and $H$, meaning that it encodes a $\garbo$ value, then one of the two chains
is required to output a value in $\{0, 1\}$. We will ensure this by carefully
selecting values for the parameters $r^i_j$. 

More specifically, the parameters are chosen so that there exists a price $p_{\ing}^*$
such that, if the price of the input good is at least $p_{\ing}^*$, then the prices of
the goods in the top chain will increase and we will have that $p_{\outg1} = H$.
On the other hand, if $p_\ing$ is less than $p_{\ing}^*$, then the prices in the bottom
chain will decrease and we will have that $p_{\outg2} \le L$. This
implies that at least one of the two chains will always output a pure value no
matter what the input price is, as required by the \PURIFY gate.

\paragraph{\bf Circuit copies.}

So far we have described a full reduction from \pcircuit to a Fisher market,
and while this construction is strong enough to give hardness for a constant $\eps$,
the bound that we would obtain would be much smaller than $1/11$. The main reason for this is
the uncertainty in the price of the reference good. As we mentioned earlier,
this price should be close to $1$, but as $\eps$ increases, our bounds on it get
weaker. 

To address this, we will introduce $k$ copies of the circuit, where each copy
is only required to work when the reference price is within a particular range.
Specifically, letting $\hmin := s/2$ and $\hmax := 2s$, we divide the region $[\hmin,\hmax]$ into $k$ equally sized non-overlapping regions. Then for each such
region $[x, y]$, we build a copy of the circuit setting $\hlow = x$ and $\hhigh
= y$. As we will show, the price of the reference good in any \eps-equilibrium will be bounded by $\prefmin := 1/2$ and $\prefmax := 2$, and so we will indeed always have that $H \in [\hmin,\hmax]$.

When we are given an approximate equilibrium of the Fisher market, we first
find an interval $[x, y]$ that contains $H$, and then decode the
assignments to the \pcircuit instance from the copy that corresponds to that
interval, while ignoring all other copies. 

The key advantage of this is that each circuit copy can now assume $\hlow$ and
$\hhigh$ are very close together, which then increases the values of $\eps$
for which we can show hardness. 

\paragraph{\bf The sufficient condition.}

Finally, we will verify that the sufficient condition for the existence of an equilibrium holds for our
construction. Recall that this condition requires that 
for every buyer $i$, there exists a good $j$ such that $u_{i,j}$ is a strictly
increasing function. There are three types of buyers in our market: inverters, 
auxiliary buyers, and the reference buyer. All of these buyers have the same
utility for the reference good, and so for every buyer $i$ in the market we
have $u_{i, \refg}(x) = x$.
Therefore the sufficient condition is satisfied.

\section{Analysis}\label{sec:analysis}

Fix any $\eps < 1/11$. The construction described in the previous section uses several parameters. For
our proofs, we will fix these parameters to the following values. We first set $\delta := \frac{11}{4}\cdot(1/11 - \eps) > 0$.

\begin{center}
\begin{tabular}{l|r|l}
Parameter & Value & Description \\ \hline
$t$ & $4/11$ & The size of each inverter's anti-endowment \\
$d$ & $2 \ceil{\log_{2}(3/\delta)}$ & The length of the \NOT gate chains in a \PURIFY gate \\
$k$ & $110/\delta$ & The number of copies of the circuit \\
$s$ & $1/(20 k d |V|)$ & The inverter's marginal utility for the output good\\
$a$ & $\max\{2,4s/\delta\}$ & The inverter's marginal utility for the input good\\
$\rnot$ & 2/11 & The value of $r$ used in the \NOT gate \\
$\rnand$ & 2/11 & The value of $r$ used in the \NAND gate \\
$r^1_j$, $j$ is odd & 0 & The $r$ values used in the first \NOT chain in a
\PURIFY gate \\
$r^1_j$, $j$ is even & $2/11$ \\
$r^2_j$, $j$ is odd & $2/11$ & The $r$ values used in the second \NOT chain in a
\PURIFY gate \\ 
$r^2_j$, $j$ is even & 0 \\
\end{tabular}
\end{center}

For the rest of this section, we will consider an $\eps$-approximate market equilibrium $(p,x)$ of the market.

\paragraph{\bf General properties of the construction.}

Before we prove the correctness of each of the individual gates, we first prove
some general properties of the construction that will be useful later. We start
by considering the reference good. Recall that the reference good was intended
to have price close to 1. The following lemma shows this this is indeed the case.

\begin{lemma}
\label{lem:p_refg_bounds}
We have $p_\refg \in [\prefmin,\prefmax]$, where 
$\prefmin = 1/2$ and $\prefmax = 2$.
In particular, it follows that $H = s \cdot p_\refg \in [\hmin,\hmax]$,
where $\hmin = s/2$ and $\hmax = 2s$.
\end{lemma}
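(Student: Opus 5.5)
We bound $p_\refg$ from both sides using the approximate clearing condition for \refg, together with a crude bound on the total money held by all buyers other than \refb. Write $D_\refg = \sum_{i\in B} x_{i,\refg}$ for the total demand on the reference good. Approximate clearing gives $1-\eps \le D_\refg \le 1+\eps$, and since $\eps<1/11$ this is contained in $(10/11, 12/11)$.

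\textbf{Upper bound.} The reference buyer \refb has budget $1$ and positive marginal utility only for \refg. So whenever $p_\refg>0$, every optimal bundle spends the whole budget on \refg, i.e.\ $x_{\refb,\refg}=1/p_\refg$. (If $p_\refg=0$, \refb is never satiated, so $\opt_\refb(p)=\emptyset$, contradicting condition 1; hence $p_\refg>0$.) Any other buyer $i$ buys at most $e_i/p_\refg$ units of \refg. Therefore
\[
D_\refg \le \frac{1+E}{p_\refg}, \qquad E:=\sum_{i\ne \refb} e_i .
\]
Combined with $D_\refg\ge 1-\eps$, this gives $p_\refg \le (1+E)/(1-\eps)$. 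Symmetrically, $D_\refg \ge x_{\refb,\refg}=1/p_\refg$, and with $D_\refg\le 1+\eps$ this gives $p_\refg\ge 1/(1+\eps) > 11/12 > 1/2$.

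\textbf{Bounding $E$.} This is the only step requiring care, since the budgets are defined via \hlow and \hhigh of the copy each buyer belongs to. In every copy we have $\hhigh\le\hmax=2s$. Each budget is then at most a constant times $s$:
\begin{itemize}
\item auxiliary buyers have budget $r\cdot\hhigh\le 2s$;
\item \NOT inverters have budget $t\cdot\hlow\le 2s$;
\item \NAND inverters have budget $2t\cdot\hlow\le 4s$.
\end{itemize}
Next we count buyers. Per copy, each original gate contributes at most $2$ buyers, except that \PURIFY gates contribute $2d$ \NOT gadgets, and there are additional top-up auxiliary buyers. This gives at most $c\,d|V|$ buyers per copy for a small constant $c$ (say $c\le 10$, using $|C|=|V|$ because each node is the output of exactly one gate). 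Over $k$ copies, the total is at most $c\,k\,d\,|V|$ buyers, each with budget at most $4s$. With $s=1/(20kd|V|)$, this yields $E\le 4c/20$.

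The count needs to be just careful enough to get $E\le 0.8$, since then $p_\refg \le 1.8\cdot 11/10 < 2$. I expect a sharper per-buyer bound to work: the budgets are really about $t\cdot 2s$ or $r\cdot 2s$ with $t,r\le 4/11$. I would therefore verify the constant $c$ precisely rather than rely on the crude estimate.

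Finally, multiplying by $s$ gives $H=s\,p_\refg\in[s/2,2s]=[\hmin,\hmax]$.
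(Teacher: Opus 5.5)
Your strategy is the same as the paper's. The lower bound uses only that $\refb$ demands $1/p_\refg$ units; your $p_\refg \ge 1/(1+\eps)$ is fine and even sharper than needed. The upper bound comes from $(1+E)/p_\refg \ge 1-\eps$.

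However, the argument does not close. The one quantitative step, bounding $E$, is left unverified, and with the numbers you actually write down it fails. Taking $c \le 10$ and budgets at most $4s$ gives $E \le 4c/20 = 2$. That yields only $p_\refg \le 3/(1-\eps)$, roughly $3.3$, rather than $p_\refg \le \prefmax = 2$, and the value $2$ is needed downstream because the copies only cover $H \in [s/2, 2s]$. You correctly note that you need $E \le 1-2\eps$ (for instance $E \le 0.8$), but saying you ``would verify the constant precisely'' is not a proof. Also, $|C| = |V|$ is false: a \PURIFY gate has two outputs, so only $|C| \le |V|$ holds.

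The missing piece is the explicit count that the paper gives.
\begin{itemize}
\item \textbf{Goods per copy.} Each copy has at most $d|V|$ goods: one per node, plus $2(d-1)$ intermediate goods per \PURIFY gate. There are at most $|V|/2$ such gates, since each has two outputs and every node is the output of exactly one gate.
\item \textbf{Buyers per good.} Every buyer other than $\refb$ values some non-reference good. Each such good is valued by at most four buyers: the inverter that outputs to it, that gadget's auxiliary buyer, and two input-side buyers (inverters taking it as input, or top-up auxiliaries). Out-degree at most $2$ guarantees this also for \PURIFY inputs, which feed two chains. So there are at most $4kd|V|$ non-reference buyers in total.
\item \textbf{Budgets.} Each budget is at most $\hmax = 2s$, not merely $4s$, because $2t = 8/11 < 1$ gives $2t\cdot\hlow \le \hlow$.
\end{itemize}
Hence $E \le 4kd|V|\cdot 2s = 2/5$, and $p_\refg \le (1+2/5)/(1-\eps) < 1.54 \le 2$. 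With your cruder $4s$ per budget, the count $c = 4$ gives exactly $E \le 0.8$. That still suffices, but only barely, so the count has to be carried out rather than estimated.
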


\begin{proof}
If $p_\refg = 0$, then, by construction, the reference buyer will demand an infinite amount of good $\refg$. In particular, $\opt_{\refb}(p) = \emptyset$, and we cannot be at an $\eps$-equilibrium. Thus, we must have $p_\refg > 0$. In that case, any optimal bundle for buyer $\refb$ will demand exactly $e_{\refb}/p_\refg = 1/p_\refg$ units of the reference good. If $p_\refg < \prefmin = 1/2$, then buyer $\refb$ demands $1/p_\refg > 2 > 1 + \eps$ units of good. Since there is only one unit of good available, this is a contradiction to the $\eps$-clearing condition. Thus, we must have $p_\refg \geq \prefmin$.

Let $E_{-\refb}$ denote the sum of budgets of all buyers, except the reference buyer $\refb$, i.e., $E_{-\refb} = \sum_{i \in B \setminus \{\refb\}} e_i$. By construction, the budget of any buyer $i \neq \refb$ satisfies $e_i \leq \hmax$. Indeed, the budget of any such buyer is either $t \cdot \hlow$, or $2t \cdot \hlow$, or $r \cdot \hhigh$ for some $r \in [0,1]$. Furthermore, we have $|B \setminus \{\refb\}| \leq 4 k d |V|$, since there are $k$ copies and in each copy there are at most $d |V|$ goods, and for each such good there are at most four buyers having non-zero utility for it. Thus, we can bound $E_{-\refb} \leq 4 k d |V| \hmax$.

We can now proceed to prove the upper bound on the price $p_\refg$. The total demand on the reference good is at most $(e_{\refb} + E_{-\refb})/p_\refg$. This corresponds to the case where all buyers spend all of their budget on the reference good. In order for the reference good to $\eps$-clear, we must thus have that $(e_{\refb} + E_{-\refb})/p_\refg \geq 1 - \eps$, i.e.,
$$p_\refg \leq \frac{e_{\refb} + E_{-\refb}}{1 - \eps} = \frac{1}{1-\eps} + \frac{E_{-\refb}}{1-\eps} \leq \frac{1}{1-\eps} + 5 k d |V| \hmax \leq \frac{1}{1-\eps} + 10 k d |V| s \leq 2 = \prefmax$$
where we used $\eps < 1/11$, $E_{-\refb} \leq 4 k d |V| \hmax$, $\hmax = 2s$, and $s \leq 1/(20 k d |V|)$.
\end{proof}

Recall that we have $k$ copies of the circuit, and that each of the circuits is
required to work only when $p_\refg$ is within the given range for that
circuit. 
From now on for the rest of this section we will focus only on the circuit copy
that was built for the particular value of $p_\refg$ in our equilibrium.
Observe that by construction we have $\hlow \leq H \leq \hhigh$ in this copy.
We can also apply \cref{lem:p_refg_bounds} to obtain the following bounds,
which will prove useful later. We have
\begin{equation}\label{eq:high-minus-low}
\hhigh - \hlow \leq 2s/k
\end{equation}
and thus
\begin{equation}\label{eq:low-over-high}
1 - 4/k \leq \hlow/\hhigh \leq 1
\end{equation}
where we used $\hhigh \geq \hmin = s/2$.

The next lemma gives, for each \NOT or \NAND gate, an upper bound on the amount
of the output good that can be allocated to buyers other than the inverter which uses
that good as an output.

\begin{lemma}
\label{lem:external_demand}
For any good $j \in G \setminus \{\refg\}$, let $i$ be the inverter that implements the gate using $j$ as an output variable. 
If the gate is a \NOT gate with parameter $r$, then the total amount of good
$j$ allocated to buyers other than $i$ is at most
$2t + r$, while if the gate is a \NAND gate, then the total amount 
$j$ allocated to buyers other than $i$ is at most
$2t + \rnand$.
\end{lemma}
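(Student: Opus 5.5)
We will bound, buyer by buyer, how much of good $j$ can be demanded by the buyers $b \neq i$ with non-zero utility for $j$. The first step is to enumerate these buyers. By construction, a variable-encoding good $j$ (whether an original node or an intermediate good of a \PURIFY chain) has positive utility only for three kinds of buyers. First, the inverter $i$ of the gate that outputs $j$. Second, the auxiliary buyer $\aux(j,r)$ of that same gate (with $r=\rnot$, $r=\rnand$, or $r=r^i_\ell$ as appropriate). Third, the buyers that use $j$ as an input. By Theorem~\ref{thm:pancircuit} there are at most two gates with $j$ as input; if a \PURIFY gate takes $j$ as input, its two \NOT gadgets play the role of two consumers. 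For every missing consumer we added a top-up buyer $\aux(j,t)$. So apart from $i$, good $j$ is wanted by one output-side auxiliary buyer with parameter $r$, together with exactly two input-side buyers. Each input-side buyer is either an inverter using $j$ as an input or a top-up buyer $\aux(j,t)$. Buyers of other circuit copies and the reference buyer have zero utility for $j$.

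The second step bounds each input-side buyer by $t$. For an inverter (of a \NOT or \NAND gadget) with input $j$, the utility $u_{\cdot,j}$ is constant once $x>t$. Any optimal bundle therefore buys at most $t$ units of $j$: if $x_{\cdot,j}>t$ and $p_j>0$, the money spent on the excess could be moved to \refg, whose utility is strictly increasing. This strictly improves utility, contradicting optimality. The same argument shows that $\aux(j,t)$ buys at most $t$ units and that the output-side $\aux(j,r)$ buys at most $r$ units.

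The main obstacle is the degenerate case $p_j=0$. There the argument above fails, because excess units are free and buying them is (weakly) optimal. We will first show that $p_j>0$ in any $\eps$-equilibrium. If $p_j=0$, consider the inverter $i$: its utility $s\cdot x$ for $j$ is strictly increasing, so its demand for the free good is unbounded and $\opt_i(p)=\emptyset$. This contradicts condition~1 of an $\eps$-equilibrium, exactly as for \refb in Lemma~\ref{lem:p_refg_bounds}. With $p_j>0$ established, every per-buyer bound above holds.

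Summing the bounds gives a total allocation to buyers other than $i$ of at most $t+t+r = 2t+r$. In the \NAND case $r=\rnand$, which gives $2t+\rnand$, as claimed.
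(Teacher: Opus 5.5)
Your proposal is correct and follows essentially the same route as the paper. You enumerate the buyers other than $i$ that value good $j$: the gate's own auxiliary buyer, plus exactly two input-side inverters or top-up buyers. You then cap each at $r$ or $t$, because their utility for $j$ saturates while \refg remains strictly desirable.

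Your explicit handling of the degenerate case $p_j = 0$, via the inverter $i$'s unbounded demand, adds a little care. The paper's proof leaves this point implicit here and only establishes $p_j > 0$ later, in \cref{lem:p_j_bounds}. You similarly rely on $p_\refg > 0$ when shifting money to \refg; that follows from \cref{lem:p_refg_bounds}.
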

\begin{proof}
The only buyers other than buyer $i$ who have non-zero utility toward good $j$ are:
\begin{itemize}
\item The auxiliary buyer from the \NOT or \NAND gate that outputs to good $j$.
A \NAND gate auxiliary will buy at most $\rnand$ units of good $j$, while an
auxiliary for a \NOT gate with parameter $r$ will buy at most $r$ units of good
$j$.
\item The inverters from gates that take good $j$ as an input, who each can buy at most $t$ units of good $j$.
\item Auxiliary buyers that buy at most $t$ units of good $j$ whenever that good is not used as an input to exactly two other gates.
\end{itemize}
These buyers cannot buy more than the amount specified above because once they
have been allocated their specified amount, their marginal utility for good $j$
becomes $0$, whereas by the sufficiency condition, all buyers have at least one good with positive marginal
utility for which they are never satiated. The same is true for all buyers that
have utility $0$ for good $j$.

Hence, for a \NOT gate with parameter $r$, the total amount demanded by buyers
other than buyer $i$ is at most $r + 2t$  while for a \NAND gate total amount
demanded by buyers other than buyer $i$ is at most $\rnand + 2t.$
\end{proof}

The next lemma states that for each \NOT or \NAND gate, the inverter must be
allocated strictly more than 0 units of the output good.

\begin{lemma}
\label{lem:non_zero_output}
For any good $j \in G \setminus \{\refg\}$, let $i$ be the inverter that
implements the \NOT or \NAND gate that has $j$ as its output variable. 
Then, buyer $i$ is allocated strictly more than $0$ units of good $j$.
\end{lemma}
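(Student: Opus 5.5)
We argue by contradiction: suppose the inverter $i$ is allocated $0$ units of the output good $j$. The plan is to show that then good $j$ fails to $\eps$-clear from below, that is, the total demand on $j$ is less than $1-\eps$.

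\textbf{Step 1.} By \cref{lem:external_demand}, if $i$ buys none of $j$, the total allocation of $j$ is at most $2t + r$, where $r \le 2/11$ (this covers $\rnot$, $\rnand$, and all $r^i_j \in \{0, 2/11\}$). With $t = 4/11$ this gives at most $8/11 + 2/11 = 10/11$.

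\textbf{Step 2.} Clearing requires $1 - \eps \le 10/11$, that is, $\eps \ge 1/11$. This contradicts $\eps < 1/11$. The argument therefore closes directly, with no case analysis on prices.

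\textbf{Step 3 (sanity check).} Confirm that \cref{lem:external_demand} applies to every gadget with $j$ as output. This includes \NOT gates inside \PURIFY chains, whose output goods $g^i_j$ are topped up by $\aux(g^i_j,t)$ in place of a second consumer, so their external demand is still at most $2t + r$. It also includes the case where $p_j = 0$, in which the bound still holds because every other buyer is satiated on $j$.

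There is no real obstacle here; the argument is a direct counting consequence of \cref{lem:external_demand}. The only care needed is verifying $2t + r \le 10/11 < 1 - \eps$ uniformly over all parameter choices in the table.
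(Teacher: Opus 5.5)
Your Steps~1--2 are exactly the paper's argument: with $x_{i,j}=0$, \cref{lem:external_demand} caps the total allocation of good $j$ at $2t+r\le 8/11+2/11=10/11<1-\eps$, so $j$ cannot $\eps$-clear. Your bound $r\le 2/11$ on the chain parameters is also stated more cleanly than in the paper.

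The one slip is the $p_j=0$ remark in Step~3. At zero price, buyers who are satiated on $j$ can hold arbitrarily many extra units for free without losing optimality, so the $2t+r$ bound does \emph{not} follow from satiation there. That case is instead vacuous: inverter $i$ has strictly increasing utility $s\cdot x$ for $j$, so $\opt_i(p)=\emptyset$. For $p_j>0$ the bound does hold, because any money spent on $j$ beyond satiation could be redirected to $\refg$, which has positive marginal utility $1/p_\refg$.
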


\begin{proof}
From \cref{lem:external_demand} we have that buyers other than buyer $i$ can
buy at most 
\begin{align*}
& \max \left(\{\rnot, \rnand \} \cup \{ r^p_q \; : \; 1 \le p \le 2, 1 \le q \le d\}
\right) + 2t  \ge 2/11 + 8/11 = 10/11
\end{align*}
units of good $j$, where we have used the fact that $t = 4/11$, that $\rnot =
\rnand = 2/11$, and that each $r^p_q \le 2 \eps < 2/11$. 
Since $\eps < 1/11$ we have that $10/11 < 1 - \eps$, so
if $x_{i,j} = 0$, then good $j$ does not $\eps$-clear, and so we are not in an
$\eps$-equilibrium.
\end{proof}

Next we prove 
that price of every good other than the reference good
can have price at most $H$.

\begin{lemma}
\label{lem:p_j_bounds}
For any good $j \in G \setminus \{\refg\}$, $0 < p_j \leq H$.
\end{lemma}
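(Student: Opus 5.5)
Both bounds follow from looking at what the inverter $i$ that outputs to good $j$ is willing to buy; we treat them separately. Every good $j \neq \refg$ is a variable-encoding good, and hence the output of exactly one \NOT or \NAND gadget (intermediate goods of \PURIFY chains are outputs of \NOT gadgets). Let $i$ be the inverter of that gadget. By \cref{lem:non_zero_output}, $x_{i,j} > 0$.

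\emph{Lower bound $p_j > 0$.} Suppose $p_j = 0$. The inverter's utility $u_{i,j}(x) = s\cdot x$ is strictly increasing, and buying more of good $j$ costs nothing. So for any candidate bundle the buyer can strictly increase utility by adding more of good $j$ without violating the budget. Hence $\opt_i(p) = \emptyset$, which contradicts condition (1) of an $\eps$-equilibrium. Therefore $p_j > 0$.

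\emph{Upper bound $p_j \le H$.} Suppose $p_j > H = s \cdot p_\refg$; recall that $p_\refg > 0$ by \cref{lem:p_refg_bounds}. Every unit of money the inverter spends on good $j$ yields marginal utility $s/p_j < 1/p_\refg$. Spending the same money on \refg instead yields $1/p_\refg$, and the inverter is never satiated with \refg. Since utilities are separable and linear in both goods, moving any positive spending from $j$ to \refg strictly increases the inverter's utility while keeping the budget feasible. So every optimal bundle has $x_{i,j} = 0$. This contradicts \cref{lem:non_zero_output}, which says $x_{i,j} > 0$ in any $\eps$-equilibrium. Hence $p_j \le H$.

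The only step needing care is the exchange argument in the upper bound. We must check that shifting money from $j$ to \refg is always feasible and strictly improving. It is, because $u_{i,\refg}(x)=x$ has constant marginal utility and is unbounded, while $u_{i,j}$ is linear with slope $s$. The rest follows directly from the earlier lemmas.
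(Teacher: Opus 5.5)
Your proof is correct and matches the paper's argument. For $p_j>0$, the inverter whose gadget outputs $j$ is never satiated in $j$, so $\opt_i(p)=\emptyset$ when $p_j=0$. For $p_j\le H$, if $p_j>H$ then that inverter's marginal utility $s/p_j$ is below $1/p_\refg$, so it buys none of $j$, contradicting \cref{lem:non_zero_output}.
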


\begin{proof}
We first prove that $p_j > 0$. For the sake of contradiction,
suppose that some good $j \in G \setminus \{\refg\}$ has $p_j = 0$. Recall that in the \pcircuit instance, each vertex is an output of exactly one gate. As a result, in our construction, for every good $j \in G \setminus \{\refg\}$ there is a buyer $i$, namely the inverter of the gate whose output is $j$, who is never satiated with $j$.
Since the price of good $j$ is $0$, this buyer will demand an infinite amount of good $j$, making $\opt_{i}(p) = \emptyset$. This implies that we cannot be in an \eps-equilibrium.

We now prove that $p_j \le H$.
So, for the sake of contradiction, suppose that $p_j > H$. By construction, as mentioned above, good $j$ is the output good of some gate, and in particular, an inverter uses this good as an output. That inverter's marginal utility for \refg is $1/p_{\refg}$, and for good $j$ the marginal utility is $s/p_j < s/H = 1/p_{\refg}$. Moreover, the inverter is never satiated with \refg, meaning that they
will demand zero units of good $j$. We can now use \cref{lem:non_zero_output} to obtain a contradiction.
\end{proof}

The next lemma states that the auxiliary players always do their jobs
correctly, meaning that they buy exactly as much of the target good as we have
specified.

\begin{lemma}
\label{lem:auxiliary}
For any auxiliary buyer $i \in B$ with target good $j \in G \setminus \{\refg\}$ and mandated amount $r \in [0,1]$, the buyer is allocated exactly $r$ units of good $j$, i.e., $x_{i,j} = r$.
\end{lemma}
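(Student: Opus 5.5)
We argue directly from optimality of the bundle $x_i \in \opt_i(p)$. The auxiliary buyer $i = \aux(j,r)$ has nonzero utility only for $j$ and for \refg. By \cref{lem:p_refg_bounds} we have $p_\refg>0$, and by \cref{lem:p_j_bounds} we have $0<p_j\le H$, so every price involved is positive and finite. The buyer's problem is then a fractional knapsack over two goods. Good $j$ gives marginal utility per unit of money $2s/p_j$ for the first $r$ units and $0$ afterwards. Good \refg gives $1/p_\refg$ per unit of money, without bound. A greedy argument (equivalently, the KKT conditions of \eqref{eq:opt-bundle}) shows that an optimal bundle spends money on $j$ first as long as its marginal rate is strictly higher. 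It also shows that an optimal bundle exhausts the budget, since the buyer is never satiated with \refg.

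The first step is to compare the rates. Since $p_j \le H = s\cdot p_\refg$, we get
\[
\frac{2s}{p_j} \ge \frac{2s}{s\,p_\refg} = \frac{2}{p_\refg} > \frac{1}{p_\refg},
\]
so the first $r$ units of $j$ are strictly preferred to \refg. The second step is affordability: buying $r$ units of $j$ costs $r p_j \le rH \le r\hhigh = e_i$, because we are working in the copy with $H \le \hhigh$. So the buyer can afford all $r$ units.

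Now suppose $x_{i,j}<r$. If some money is left unspent or is spent on \refg, then moving an amount $\eta>0$ of money to good $j$ changes utility by $\eta(2s/p_j - 1/p_\refg)>0$ when it is taken from \refg. When it is taken from unspent budget the change is $2s\eta/p_j>0$. The amount moved is small enough to stay within the first linear piece and within the budget, so this contradicts optimality. If instead all money is spent on $j$ and $x_{i,j}<r$, then $x_{i,j}=e_i/p_j\ge r$, which is a contradiction. Conversely, suppose $x_{i,j}>r$. The excess gives zero marginal utility but costs $p_j>0$. Moving that money to \refg strictly increases utility, again contradicting optimality. Hence $x_{i,j}=r$.

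The only delicate point is the affordability and rate comparison. Both rest on $p_j\le H$ from \cref{lem:p_j_bounds} and on $H\le\hhigh$ in the relevant copy. For auxiliary buyers in other circuit copies, whose $\hhigh$ may be smaller than $H$, affordability could fail, and those copies need separate care. If the lemma is meant for all copies, I would first try to handle this via the per-copy budget. Failing that, I would restrict the statement to the copy under consideration, as the surrounding text does.
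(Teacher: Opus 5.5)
Your proof is correct and follows the paper's argument. The upper bound $x_{i,j}\le r$ comes from zero marginal utility beyond $r$ against the positive rate $1/p_\refg$ for \refg, and the lower bound comes from the rate comparison $2s/p_j \ge 2s/H > s/H = 1/p_\refg$ (via \cref{lem:p_j_bounds}) together with affordability $e_i = r\hhigh \ge rH$. Your caveat about other copies matches how the paper handles it: the lemma is stated under the standing assumption that we work in the circuit copy with $\hlow \le H \le \hhigh$, and that is the only copy where the lemma is used.
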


\begin{proof}
We first argue that $x_{i, j} \le r$. This follows from the fact that, once
buyer $i$ is allocated $r$ units of good $j$, their marginal utility for good
$j$ is $0$, whereas their marginal utility for the reference good is
$1/p_\refg > 0$. Thus buyer $j$ strictly prefers the reference good to good $j$, and so cannot be allocated more than $r$ units of good $j$.  

Next we argue that $x_{i, j} \ge r$. If buyer $i$ has been allocated strictly
less than $r$ units of good $j$, then their marginal utility for good $j$ is
$2s/p_j$. From \cref{lem:p_j_bounds}, we have that $p_j \le H$, so their
marginal utility for good $j$ is at least $2s/H$. On the other hand, their
marginal utility for reference good is $1/p_\refg = s/H$, and their marginal
utility for all other goods is $0$. Thus buyer $i$ strictly prefers 
good $j$ to any other good. Moreover, since $e_i = r \cdot \hhigh \ge r \cdot H$ ,
buyer $i$ has enough money to buy $r$ units of good $j$. Thus any optimal
bundle must allocate at least $r$ units of good $j$ to buyer $i$.
\end{proof}

\subsection{Bounds on anti-endowment purchases}

Recall that our intention is that for each variable-encoding good, the gadgets
that take that good as an input should purchase $t$ units of the good as an
anti-endowment. As we
mentioned earlier, this is unfortunately not the case, and it is possible
that less that $t$ units are purchased by each gadget. In this section we 
formally prove bounds on how much each gadget purchases.

The following lemma shows that if the output good has price strictly greater
than $L$, then the inverter buyer must purchase their full anti-endowment
before buying any other good. In particular, since \cref{lem:non_zero_output}
requires the inverter to spend non-zero money on the output good, this means
that the inverter must buy $t$ units of all input goods.

\begin{lemma}
\label{lem:not_gate_big_l}
Let $j$ be an inverter in any \NOT or \NAND gate. 
If $p_\outg > L$ then
\begin{itemize}
\item if the gate is a \NOT gate then $j$ must buy $t$ units of
\ing before buying any other good; and
\item if the gate is a \NAND gate then $j$ must buy $t$ units of \inoneg and
$t$ units of \intwog before buying any other good.
\end{itemize}
\end{lemma}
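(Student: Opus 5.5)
The proof compares marginal utilities per unit of money ("bang-per-buck") for the inverter $j$. Since the utilities are separable piecewise-linear concave, an optimal bundle is obtained greedily: the buyer spends money on the available linear segments in decreasing order of bang-per-buck. It therefore suffices to show that the segment of each input good (slope $a$ up to $t$ units) has strictly larger bang-per-buck than every other segment the inverter has positive utility for. That is, as long as fewer than $t$ units of an input good have been bought, spending on it must come first.

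The bang-per-buck values are as follows.

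\begin{itemize}
\item For an input good $g$ (namely \ing, or \inoneg/\intwog for \NAND), while $x_{j,g} < t$ it is $a/p_g$. By \cref{lem:p_j_bounds} we have $0 < p_g \le H$, so this is at least $a/H$.
\item For the output good it is $s/p_\outg$. By hypothesis $p_\outg > L = s\cdot H/a$, so $s/p_\outg < s\cdot a/(sH) = a/H$.
\item For \refg it is $1/p_\refg = s/H$. Since $a \ge 2 > s$ (recall $s < 1$), this is strictly less than $a/H$.
\end{itemize}

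So the bang-per-buck of each unsaturated input segment, at least $a/H$, strictly exceeds that of the output good and of \refg. The inverter has zero utility for all other goods.

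The conclusion then follows by an exchange argument. Suppose an optimal bundle spends a positive amount of money on \outg or \refg while some input good $g$ has $x_{j,g} < t$. Moving a small amount $\eta$ of money from that purchase to $g$ keeps the bundle within budget and strictly increases utility, since the gain of $\eta a/p_g$ exceeds the loss. This contradicts optimality. Hence in any optimal bundle, money is spent on other goods only once every input good has reached $t$ units.

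For the \NOT gate the input is the single good \ing. For \NAND both inputs satisfy the bound, so both must reach $t$ before any other spending. The strict inequality $s/p_\outg < a/H$ is exactly where the hypothesis $p_\outg > L$ is used.

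There is one subtlety: the phrase ``must buy $t$ units'' is only meaningful if the budget allows it. Combined with \cref{lem:non_zero_output}, which says the inverter does spend a positive amount on \outg, the exchange argument gives $x_{j,g} = t$ for each input $g$, as remarked before the lemma. The main point to check carefully is the pair of strict inequalities, in particular $s/p_\outg < a/H$, which relies on $L$ being defined as $s\cdot H/a$ with the actual $H$ rather than \hlow or \hhigh.
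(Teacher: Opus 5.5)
Your proposal is correct and follows the paper's proof. The paper makes the same three marginal-utility comparisons: each unsaturated input segment gives at least $a/H$ by \cref{lem:p_j_bounds}, the output good gives $s/p_\outg < s/L = a/H$, and \refg gives $s/H < a/H$ because $s<a$. Your explicit exchange argument and your remark on budget feasibility (via \cref{lem:non_zero_output}) only spell out steps the paper leaves implicit.
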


\begin{proof}
If the gate is a \NOT gate, then the inverter's marginal utility for \ing is $a/p_\ing \geq a/H$, where the
second inequality comes from \cref{lem:p_j_bounds}. On the other hand, their
marginal utility for \outg is $s/p_\outg < s/L = a/H$, and their marginal
utility for $\refg$ is $1/p_\refg = s/H$. Since $s < a$, the inverter will buy
up to $t$ units of \ing before buying any other good.

For the case where the gate is a \NAND gate, we have that the inverter's
marginal utility for \inoneg is $a/p_\inoneg \geq a/H$, and the inverter's marginal
utility for \intwog is $a/p_\intwog \geq a/H$, where in both cases the second
inequality comes from \cref{lem:p_j_bounds}. We can now use the same argument
as above to conclude that the inverter must buy $t$ units of both \inoneg and
\intwog before buying any other good.
\end{proof}

The only remaining case is when $p_\outg \le L$. In this case the inverter
may actually purchase the output good before buying any of the input
goods. However, since $L$ is very close to zero, the maximum amount of money that the
inverter can spend on the output good without violating the $\eps$-clearing
constraint is also very small. This may cause the inverter to buy 
slightly less than
their
full anti-endowments. The following pair of lemmas give formal bounds on this
for \NOT gates and \NAND gates, respectively.

\begin{lemma}
\label{lem:not_gate_inverter}
For any \NOT gate, the inverter's allocation of the input good, \ing, satisfies $x_{\inverter,\ing} \in [t - 3/k ,t]$. 
\end{lemma}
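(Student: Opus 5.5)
The upper bound $x_{\inverter,\ing} \le t$ is immediate: once the inverter holds $t$ units of \ing its marginal utility for \ing drops to $0$, while its marginal utility for \refg stays at $1/p_\refg > 0$ and it is never satiated there. So any optimal bundle contains at most $t$ units of \ing. For the lower bound I will split on the price of the output good. If $p_\outg > L$, then \cref{lem:not_gate_big_l} says the inverter buys its full anti-endowment before anything else. \cref{lem:non_zero_output} says it buys a positive amount of \outg. Together these give $x_{\inverter,\ing} = t$.

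The real case is $p_\outg \le L$. Here the inverter may spend on \outg first, since $s/p_\outg$ may exceed $a/p_\ing$. The idea is that money spent on \outg buys many units, because the good is cheap, and the clearing constraint caps the total demand for \outg at $1+\eps$. By \cref{lem:auxiliary}, the auxiliary buyer $\aux(\outg, r)$ holds exactly $r \ge 0$ units. All other allocations are nonnegative, so $x_{\inverter,\outg} \le 1 + \eps < 2$. The money the inverter spends on \outg is therefore at most $2L = 2s H/a \le 2s\hhigh/a$.

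Next I will argue that, apart from this small spending, the inverter buys \ing before \refg. Its marginal utility for \ing below the anti-endowment is $a/p_\ing \ge a/H > s/H = 1/p_\refg$, using \cref{lem:p_j_bounds} and $a \ge 2 > s$. Hence, if $x_{\inverter,\ing} < t$, no money goes to \refg, and the whole budget $t\hlow$ is spent on \ing and \outg. This gives
\[ p_\ing\, x_{\inverter,\ing} \ge t\hlow - 2s\hhigh/a . \]
Using $p_\ing \le H \le \hhigh$, we get
\[ x_{\inverter,\ing} \ge t\,\hlow/\hhigh - 2s/a . \]

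Finally I plug in the parameters. By \eqref{eq:low-over-high}, $t\hlow/\hhigh \ge t(1-4/k) \ge t - 4t/k$, and $4t = 16/11 < 2$, so this term is at least $t - 2/k$. Since $a \ge 4s/\delta$ and $k = 110/\delta$, we have $2s/a \le \delta/2 = 55/k$. That is too weak against the target $3/k$ as stated. This parameter bookkeeping is the main obstacle.

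To fix it, I would instead bound the money on \outg using $L = s^2 p_\refg/a \le 2s^2/a$. With $a \ge 2$ this gives money at most $2s^2$. Dividing by $p_\ing$ is the delicate point, because $p_\ing$ could be tiny. So I would avoid dividing by $p_\ing$ directly and argue through the budget fraction instead. The loss in the bound on $x_{\inverter,\ing}$ is then at most $2L/\hhigh$, and
\[ \frac{2L}{\hhigh} \le \frac{2sH}{a\hhigh} \le \frac{2s}{a} \le s \le \frac{1}{20k} \le \frac{1}{k}, \]
using $a \ge 2$ and the definition of $s$. Combining this with the $2/k$ loss from $\hlow/\hhigh$ gives $x_{\inverter,\ing} \ge t - 3/k$. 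This uses only $a \ge 2$ and $s \le 1/(20kd|V|)$, and avoids the weak $\delta$-based bound.
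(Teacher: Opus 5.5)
Your proof is correct and follows essentially the same route as the paper: when $p_\outg > L$ you combine \cref{lem:not_gate_big_l} with \cref{lem:non_zero_output}, and when $p_\outg \le L$ you cap the inverter's spending on \outg by $(1+\eps)L$ and send the rest of the budget to \ing. You then divide by $\hhigh$ and use $\hlow/\hhigh \ge 1-4/k$. The detour through $a \ge 4s/\delta$ and the worry about a small $p_\ing$ are unnecessary, since a small $p_\ing$ only strengthens the lower bound; your final estimate $2s/a \le s \le 1/(20k)$, using only $a \ge 2$, is exactly the bookkeeping the paper uses.
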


\begin{proof}
We first consider the case where $p_\outg > L$. By \cref{lem:not_gate_big_l},
we have that the inverter must buy $t$ units of $\outg$ before buying any other
good. There are two sub-cases.
\begin{itemize}
\item If $e_\inverter \ge t \cdot p_\ing$ then the inverter has enough money to buy $t$ units of \ing, and does so.
\item If $e_\inverter < t \cdot p_\ing$ then the inverter does not have enough
money to buy $t$ units of \ing. This means that they spend all of their money
on \ing, and therefore they they will demand zero units of \outg, which
contradicts \cref{lem:non_zero_output}.
\end{itemize}

We now consider the case where $0 < p_\outg \leq L$. First note that the marginal utility of \outg is $s/p_\outg > s/L = a/H$, whereas the marginal utility of \refg is $1/p_\refg = s/H$. Since $a > s$, and since the inverter is never satiated by \outg, this means that the inverter cannot spend anything on \refg. Due to the $\eps$-clearing constraint, we know that the inverter can buy at most $1+\eps$ units of \outg, and so the inverter will spend at most $(1+\eps) \cdot L$ money on \outg. All remaining money must be spent on \ing, so the inverter will spend at least $e_\inverter - (1+\eps) \cdot L$ money on \ing. Therefore, since $p_\ing \leq H$ by \cref{lem:p_j_bounds}, the inverter will buy at least
\begin{align*}
(e_\inverter - (1+\eps) \cdot L)/H 
& = t \cdot \hlow/H - (1+\eps) \cdot s / a\\
&\geq t \cdot \hlow/\hhigh - (1+\eps) \cdot s / a\\
&\geq t \cdot (1-4/k) - (1+\eps) \cdot s / a\\
&\geq t - 4t/k - (1+\eps) \cdot s / a\\
&\geq t - 2/k - (1+\eps) \cdot s / a\\
&\geq t - 2/k - 2 \cdot (1/20k)\\
&\geq t - 3/k
\end{align*}
units of \ing. Here we used that $t \leq 1/2$, that $s \leq 1/(20 k d |V|) \leq 1/20k$, that $a \geq 1$, and that $1+\eps \leq 2$.
\end{proof}

\begin{lemma}
\label{lem:nand_gate_inverter}
For any \NAND gate, the inverter buyer's allocations of the input goods, \inoneg and \intwog, satisfy $x_{\inverter,\inoneg}, x_{\inverter,\intwog} \in [t - 5/k ,t]$. 
\end{lemma}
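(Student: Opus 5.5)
Mirroring the proof of \cref{lem:not_gate_inverter}, I would first note that the upper bound $x_{\inverter,j} \le t$ for $j \in \{\inoneg,\intwog\}$ holds trivially. Once $t$ units of $j$ are bought, the inverter's marginal utility for $j$ drops to $0$, while its marginal utility for \refg stays at $1/p_\refg > 0$ and it is never satiated there. The lower bound then splits into two cases according to whether $p_\outg > L$ or $0 < p_\outg \le L$; positivity of $p_\outg$ comes from \cref{lem:p_j_bounds}.

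\textbf{Case $p_\outg > L$.} By \cref{lem:not_gate_big_l}, the inverter must buy $t$ units of both \inoneg and \intwog before spending anything elsewhere. If $e_\inverter = 2t\hlow \ge t(p_\inoneg + p_\intwog)$, it can afford both anti-endowments and buys them, so both allocations equal $t$. Otherwise its whole budget goes to the inputs and it buys zero units of \outg, which contradicts \cref{lem:non_zero_output}.

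\textbf{Case $p_\outg \le L$.} As in the \NOT case, the marginal utility of \outg is $s/p_\outg \ge s/L = a/H > s/H = 1/p_\refg$, so nothing is spent on \refg. By $\eps$-clearing the inverter buys at most $1+\eps$ units of \outg, hence spends at most $(1+\eps)L$ on it. The remaining money, at least $2t\hlow - (1+\eps)L$, is spent on \inoneg and \intwog.

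This case is the main obstacle, because the leftover money is shared between two goods and we need a bound for each input separately. I would use that the inverter spends at most $t\,p_{\intwog} \le tH$ on \intwog, since it never buys more than $t$ units and $p_\intwog \le H$ by \cref{lem:p_j_bounds}. Hence spending on \inoneg is at least $2t\hlow - tH - (1+\eps)L$. Dividing by $p_\inoneg \le H$ gives
\[
x_{\inverter,\inoneg} \;\ge\; 2t\,\hlow/H - t - (1+\eps)s/a \;\ge\; 2t(1-4/k) - t - (1+\eps)s/a \;=\; t - 8t/k - (1+\eps)s/a .
\]
Here I use $\hlow/H \ge \hlow/\hhigh \ge 1-4/k$ from \eqref{eq:low-over-high} and $L/H = s/a$. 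With $t = 4/11$ we get $8t/k = 32/(11k) < 3/k$. With $1+\eps \le 2$, $a \ge 1$ and $s \le 1/(20k)$ we get $(1+\eps)s/a \le 1/(10k)$. So the total loss is below $5/k$, and the bound $x_{\inverter,\inoneg} \ge t - 5/k$ follows. The same argument with the roles of the inputs swapped gives the bound for \intwog.
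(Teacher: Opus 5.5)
Your proof is correct and follows the paper's argument essentially step for step. You use the same split on $p_\outg > L$ versus $p_\outg \le L$, with \cref{lem:not_gate_big_l} and \cref{lem:non_zero_output} in the first case, and the same worst-case accounting in the second (charging $tH$ to the other input and $(1+\eps)L$ to the output); your constant bookkeeping ($8t/k = 32/(11k) < 3/k$) and your non-strict $s/p_\outg \ge s/L$ are slightly more careful than the paper's, but the route is the same.
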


\begin{proof}
This proof is very similar to the proof of \cref{lem:not_gate_inverter},
but now we must account for the fact that the inverter has two input goods.

We first consider the case where $p_\outg > L$. By \cref{lem:not_gate_big_l}, the inverter must buy $t$ units of \inoneg and $t$ units of
\intwog before buying any other good. There are now two sub-cases.
\begin{itemize}
\item If $e_\inverter \ge t \cdot p_\inoneg + t \cdot p_\intwog$ then the
inverter has enough money to buy $t$ units of $\inoneg$ and $t$ units of $\intwog$, and does so.
\item If $e_\inverter < t \cdot p_\inoneg + t \cdot p_\intwog$ then the
inverter does not have enough money to buy $t$ units of \inoneg and $t$ units
of \intwog. This means that they spend all of their money on the input goods,
and therefore they they will demand zero units of \outg, which contradicts
\cref{lem:non_zero_output}.
\end{itemize}

We now consider the case where $0 < p_\outg \leq L$. First note that the
marginal utility of \outg is $s/p_\outg > s/L = a/H$, whereas the marginal
utility of \refg is $1/p_\refg = s/H$. Since $a > s$, and since the inverter is
never satiated by \outg, this means that the inverter cannot spend anything on
\refg. 
Due to the $\eps$-clearing constraint, we know that the inverter can buy
at most $1+\eps$ units of \outg, and so the inverter will spend at most
$(1+\eps) \cdot L$ money on \outg. 

All remaining money must be spent on \inoneg and \intwog so the inverter will
spend at least $e_\inverter - (1+\eps) \cdot L$ money on these two goods. Note
also that the inverter cannot buy more than~$t$ units of either input good,
since their marginal utility for that input becomes $0$ once they have~$t$ units. 
Thus, for any input good $j \in \{\inoneg, \intwog\}$, in the worst case 
the inverter buys $t$ units of the other input and pays price $H$ for those units, and then spends the rest of
their money on good $j$ and pays price $H$ for those units as well. So the inverter will buy at least 
\begin{align*}
(e_\inverter - t \cdot H - (1+\eps) \cdot L)/H 
&= 2t \cdot \hlow/H - t  - (1+\eps) \cdot s / a\\
&\geq 2t \cdot (1-4/k) - t  - 2 \cdot (1/20k) \\
&\geq t - 8t/k - 1/10k\\
&\geq t - 4/k - 1/10k\\
&\geq t - 5/k
\end{align*}
units of good $j$.
Here we used that $t \leq 1/2$, that $s \leq 1/(20 k d |V|) \leq 1/20k$, that $a \geq 1$, and that $1+\eps \leq 2$.
\end{proof}

We do not need to treat the \PURIFY gates separately, since each \PURIFY gate is
constructed entirely out of \NOT gates. Combining the previous two lemmas gives
the following bound.

\begin{lemma}
\label{lem:output_good_outside_gate}
For any gate, the total amount of the output good that is allocated to buyers that are not part of the gate's construction lies in $[2\tmin, 2t]$, where $\tmin := t - 5/k$.
\end{lemma}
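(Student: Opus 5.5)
The plan is to fix a gate $g$ with output good $j$ and enumerate exactly which buyers outside $g$'s construction can hold any of good $j$. Then I bound each of their allocations using \cref{lem:auxiliary}, \cref{lem:not_gate_inverter} and \cref{lem:nand_gate_inverter}.

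By the interaction-graph structure, the buyers outside the gate with non-zero utility for $j$ are of two kinds.
\begin{itemize}
\item The inverters of the gates that use $j$ as an input. There are at most two of these, since out-degree is at most 2; gates inside a \PURIFY construction are themselves \NOT gates.
\item The top-up auxiliary buyers $\aux(j,t)$. These are added so that the number of input-gate inverters plus top-up auxiliaries is exactly two.
\end{itemize}
This accounting also covers the intermediate goods $g^i_j$ of a \PURIFY chain and its input good \ing, which is used by the two first \NOT gadgets of the chains. The gate's own auxiliary buyer $\aux(j,r)$ and its own inverter are part of the gate's construction, so they are excluded. All other buyers have zero utility for $j$ and so, as argued in the proof of \cref{lem:external_demand}, are allocated nothing of $j$.

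Next I bound each of the exactly two contributors. A top-up auxiliary receives exactly $t$ units by \cref{lem:auxiliary}, and $t \in [\tmin, t]$. An inverter of a \NOT gate using $j$ as input receives an amount in $[t-3/k, t] \subseteq [\tmin,t]$ by \cref{lem:not_gate_inverter}. An inverter of a \NAND gate receives an amount in $[t-5/k,t] = [\tmin,t]$ by \cref{lem:nand_gate_inverter}. Summing the two contributions gives a total in $[2\tmin, 2t]$.

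The only delicate step is the bookkeeping: checking that the count of external buyers is exactly two in every case (degree 0, 1, or 2 as an input). This includes goods internal to \PURIFY gadgets and the input of a \PURIFY gate, where one node feeds two \NOT gadgets. I would handle it directly from the construction's rule that top-up auxiliaries compensate for missing input gates, treating every \NOT gadget in a chain as a full gate.
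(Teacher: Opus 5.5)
Your proposal is correct and follows essentially the same route as the paper. Both arguments identify exactly two buyers outside the gate that buy its output good: the inverters of gates using it as input, or top-up auxiliaries. Both then bound each buyer via \cref{lem:not_gate_inverter}, \cref{lem:nand_gate_inverter} and \cref{lem:auxiliary}, and sum the two amounts. Your explicit bookkeeping for the input of a \PURIFY gate, which feeds two \NOT gadgets and so receives no top-up, spells out what the paper leaves to ``by construction''.
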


\begin{proof}
By construction, for each good, there are exactly two buyers that are not part of the gate's
construction that are interested in buying that good. Those buyers are either
inverter buyers that implement a \NOT or \NAND gate, or are auxiliary buyers
that top-up the amount bought in the case where the variable encoded by that
good is the input to fewer than two other gates. By 
\cref{lem:not_gate_inverter} and
\cref{lem:nand_gate_inverter} the inverter buyers will buy some amount in the
range $[t - 5/k ,t]$, while by \cref{lem:auxiliary} the auxiliary buyers will
buy exactly $t$ units of the good. 
Thus the total amount of the good bought by these buyers lies in the range
$[2\tmin, 2t]$.
\end{proof}

\subsection{Correctness of the gates}

We now show that the gate constructions correctly simulate the gates of a
\pcircuit. In this section we will also use the notation $\lhigh := s \cdot \hhigh / a$. Since $H \in [\hlow,\hhigh]$ and $L = s \cdot H / a$, we thus also have $L \leq \lhigh$.

\paragraph{\bf \NOT gates.}

The constraints of a \NOT gate in a \pcircuit only require the gate to work
when the input is either $0$ or $1$. So to prove that the \NOT gate works, it is
sufficient to consider the cases where $p_\ing \le L$ and $p_\ing \ge H$.
However, since we use \NOT gates to implement \PURIFY gates, and since \PURIFY
gates must output a pure value even when the input is $\garbo$, we will need to
understand the relationship between the price of the output good and the price
of the input good even when $L < p_\ing < H$. We do this in the following pair
of lemmas, which give upper and lower bounds on $p_\outg$ with respect to
$p_\ing$. 

\begin{lemma}
\label{lem:not_upper_bound}
For any $\NOT(\ing, \outg, r)$ gate with $r \in [0,1]$ satisfying $1 - 2t - r - \eps > 0$, we have
\begin{equation*}
p_\outg \le \max \left(L, \; (\hlow - p_\ing) \cdot \frac{t}{1 - 2t - r - \eps}
\right).
\end{equation*}
\end{lemma}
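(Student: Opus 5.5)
We argue by cases on $p_\outg$. If $p_\outg \le L$ there is nothing to prove, so assume $p_\outg > L$ and aim to show $p_\outg \le (\hlow - p_\ing)\cdot t/(1-2t-r-\eps)$. The idea is a money-counting argument. The inverter's spending on \outg is limited by its budget minus what it spends on \ing. The amount of \outg the inverter must absorb is forced by the clearing constraint together with the bounded demand of everyone else.

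First we use \cref{lem:not_gate_big_l}: since $p_\outg > L$, the inverter buys its full anti-endowment of $t$ units of \ing before buying anything else. By \cref{lem:non_zero_output} it buys a positive amount of \outg, so the first sub-case in the proof of \cref{lem:not_gate_inverter} applies. The inverter therefore pays exactly $t\cdot p_\ing$ for \ing and has at most $e_\inverter - t p_\ing = t(\hlow - p_\ing)$ left for \outg and \refg. Hence
\[
x_{\inverter,\outg} \le \frac{t(\hlow - p_\ing)}{p_\outg}.
\]
In particular $\hlow - p_\ing > 0$ in this case, since otherwise the inverter could not afford any \outg.

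Next we bound the demand on \outg from the other side. By \cref{lem:external_demand}, every buyer other than the inverter is allocated at most $2t + r$ units of \outg. The $\eps$-clearing condition gives $\sum_i x_{i,\outg} \ge 1-\eps$, so
\[
x_{\inverter,\outg} \ge 1 - \eps - 2t - r > 0,
\]
where positivity is the hypothesis of the lemma. Combining the two displays gives $p_\outg(1-2t-r-\eps) \le t(\hlow - p_\ing)$. Dividing by the positive quantity $1-2t-r-\eps$ yields the claimed bound, and hence the maximum with $L$.

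The only delicate step is justifying that the inverter pays exactly $t p_\ing$ for \ing, which rules out the possibility that it buys \outg before its anti-endowment. That is precisely what \cref{lem:not_gate_big_l} guarantees once $p_\outg > L$, combined with the affordability argument from \cref{lem:not_gate_inverter}. Without it, the inverter could divert less money to \ing, and the budget bound on \outg would fail. Everything else is direct substitution.
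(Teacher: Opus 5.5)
Your proof is correct and is essentially the paper's argument: assume $p_\outg > L$, use \cref{lem:not_gate_big_l} to charge $t \cdot p_\ing$ of the inverter's budget $t \cdot \hlow$ to the anti-endowment, and combine \cref{lem:external_demand} with $\eps$-clearing to force the inverter to spend at least $p_\outg(1-2t-r-\eps)$ on \outg. Your extra appeal to \cref{lem:non_zero_output}, used to confirm that the anti-endowment is actually affordable, makes explicit a step the paper leaves implicit.
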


\begin{proof}
We begin by considering the case where $p_\outg > L$. In this case, by
\cref{lem:not_gate_big_l}, which works for \NOT gates with any value of $r$,
we have that the inverter of the \NOT gate must buy $t$ units of
the input good before buying any other good. This means that the inverter can spend at
most 
\begin{align*}
& e_\inverter - t \cdot p_\ing = t \cdot (\hlow - p_\ing)
\end{align*}
money on goods other than the input good.

By \cref{lem:external_demand}, buyers other than
the inverter demand at most $2t + r$
units of the output good. Thus, to ensure that the output good $\eps$-clears, we require that at least
$1 - \eps - 2t - r$
units of the output good are bought by the inverter, and so the inverter must
spend at least $p_\outg \cdot ( 1 - \eps - 2t -r)$ money on the output good. 

For this to be possible we must have
$$t \cdot (\hlow - p_\ing) \ge p_\outg \cdot ( 1 - \eps - 2t - r),$$ 
and therefore
$$p_\outg \le (\hlow - p_\ing) \cdot \frac{t}{ 1 - \eps - 2t -r}$$
since $1 - \eps - 2t -r > 0$.
So we have shown that either $p_\outg \le L$, or that the bound above holds. So
we can conclude that 
$$p_\outg \le \max \left(L, \; (\hlow - p_\ing) \cdot \frac{t}{ 1 - \eps - 2t
-r}
\right) .$$
\end{proof}

\begin{lemma}
\label{lem:not_lower_bound}
For any $\NOT(\ing, \outg, r)$ gate with $r \in [0,1]$ satisfying $1 - 2t - r - \eps > 0$, we have
\begin{equation*}
p_\outg \ge \min \left( H, \;  (\hlow - p_\ing) \cdot \frac{t}{1 - 2\tmin - r +
\eps} \right).
\end{equation*}
\end{lemma}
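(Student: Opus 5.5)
We argue by contradiction. Suppose $p_\outg < H$ and $p_\outg < (\hlow - p_\ing) \cdot t/(1 - 2\tmin - r + \eps)$; we will show that good \outg is over-demanded by more than $\eps$. Since $p_\outg \ge 0$, the second inequality forces $\hlow - p_\ing > 0$, so in particular $p_\ing < \hlow \le H$. We split on whether $p_\outg \le L$ or $p_\outg > L$.

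\textbf{Case $p_\outg > L$.} By \cref{lem:not_gate_big_l}, the inverter first buys $t$ units of \ing, spending $t \cdot p_\ing$. This is affordable because $e_\inverter = t\cdot\hlow > t \cdot p_\ing$, so the remaining money is $t(\hlow - p_\ing) > 0$.

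Next we compare marginal utilities among the remaining goods. Since $p_\outg < H = s\cdot p_\refg$, we get $s/p_\outg > 1/p_\refg$, so \outg strictly beats \refg. The inverter has zero utility for all other goods and is never satiated with \outg. Hence every optimal bundle spends all remaining money on \outg, and the inverter demands exactly $t(\hlow - p_\ing)/p_\outg$ units of it.

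Now we count the other buyers of \outg. By \cref{lem:auxiliary}, the auxiliary buyer of the gate takes exactly $r$ units. By \cref{lem:output_good_outside_gate}, the external buyers take at least $2\tmin$ units. The total allocation of \outg is therefore at least
\[
2\tmin + r + \frac{t(\hlow - p_\ing)}{p_\outg} > 2\tmin + r + (1 - 2\tmin - r + \eps) = 1+\eps ,
\]
where the strict inequality uses the assumed upper bound on $p_\outg$. This violates $\eps$-clearing.

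\textbf{Case $p_\outg \le L$.} As in the proof of \cref{lem:not_gate_inverter}, the inverter's marginal utility for \outg is $s/p_\outg \ge s/L = a/H$. We need this to dominate both \refg and \ing.
\begin{itemize}
\item For \refg: $a/H > s/H$ because $a > s$.
\item For \ing: the marginal utility is $a/p_\ing$. When $p_\outg < L$ we have $s/p_\outg > a/H$, and also $a/H \le a/p_\ing$. These two facts alone do not settle which of \outg and \ing the inverter prefers.
\end{itemize}
So we argue directly instead. The inverter spends its whole budget on \ing and \outg, and it can spend at most $t\cdot p_\ing$ on \ing. It therefore spends at least $t(\hlow - p_\ing)$ on \outg, which again gives demand at least $t(\hlow - p_\ing)/p_\outg$. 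The same count as in the first case then yields total allocation exceeding $1+\eps$.

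One point must be checked here: that the inverter spends nothing on \refg. Since $s/p_\outg \ge a/H > s/H = 1/p_\refg$ and \outg is unsatiable, any money spent on \refg could be moved profitably to \outg. So the inverter indeed spends its entire budget on \ing and \outg.

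The main obstacle is this second case. We must make sure the spending accounting holds without invoking \cref{lem:not_gate_big_l}, and in particular that the cap of $t$ units on \ing bounds what is spent on \ing by $t\cdot p_\ing$. The first case is a routine consequence of the earlier lemmas.

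Both cases contradict $\eps$-clearing, so $p_\outg \ge \min\bigl(H, (\hlow - p_\ing)\cdot t/(1 - 2\tmin - r + \eps)\bigr)$, as claimed. The hypothesis $1 - 2t - r - \eps > 0$ implies $1 - 2\tmin - r + \eps > 0$, so the right-hand side is well defined.
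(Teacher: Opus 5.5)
Your proof is correct and follows the paper's argument. When $p_\outg < H$, the inverter spends nothing on \refg and at most $t \cdot p_\ing$ on \ing, so at least $t(\hlow - p_\ing)$ goes to \outg; together with the at least $2\tmin + r$ units taken by other buyers, this contradicts $\eps$-clearing unless the stated bound holds. The split into $p_\outg > L$ and $p_\outg \le L$ (with the appeal to \cref{lem:not_gate_big_l}) is unnecessary: your ``direct'' argument in the second case uses only $p_\outg < H$, so it already covers both cases, which is exactly how the paper proceeds.
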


\begin{proof}
We first consider the case where $p_\outg < H$. In this case, the marginal
utility of buying the reference good is $1/p_\refg = s/H$, whereas the marginal
utility of the output good is $s / p_\outg > s/H$. Thus the inverter strictly
prefers the output good to the reference good. Since the inverter can never be
satiated by the output good, this means that the inverter cannot buy the
reference good. 

Furthermore, the inverter cannot buy more than $t$ units of the input good,
because the inverter's marginal utility becomes 0 at that point. 
This means that the inverter can spend at most $t \cdot p_\ing$ money on the
input good, and will have at least
\begin{align*}
& e_\inverter - t \cdot p_\ing = t \cdot (\hlow - p_\ing)
\end{align*}
money left over after buying all goods other than \outg. This money must
therefore be spent on \outg. 

By \cref{lem:output_good_outside_gate} we have that buyers who are not part of
the \NOT gate will buy at least 2\tmin units of the output good, and by
\cref{lem:auxiliary} we have that $r$ units of the output good will be bought
by the auxiliary buyer of the \NOT gate. Since the good must $\eps$-clear, this
means that the inverter can buy at most $1 - 2\tmin - r + \eps$ units of the
output good. 

Since $t \cdot (\hlow - p_\ing)$ money must be spent on the output good, and
since at most $1 - 2\tmin - r + \eps$ units of that good can be bought, we have
$$t \cdot (\hlow - p_\ing) \le (1 - 2\tmin - r + \eps) \cdot p_\outg,$$
because otherwise there would be money left over that is not spent on any good,
which cannot happen in an optimal allocation. Rearranging this gives
$$ p_\outg \ge (\hlow - p_\ing) \cdot \frac{t}{1 - 2\tmin - r + \eps}$$
since $1 - 2\tmin - r + \eps \geq 1 - 2t - r - \eps > 0$.

So we have shown that we either have $p_\outg \ge H$, or we have the inequality
given above. Hence we have
$$ p_\outg \ge \min \left( H, \;  (\hlow - p_\ing) \cdot \frac{t}{1 -
2\tmin - r + \eps} \right).$$
\end{proof}

We can now prove that the \NOT gate works for pure values by applying the
previous two lemmas. 

\begin{lemma}
\label{lem:not_gate_correct}
For each \NOT gate with input \ing and output \outg, we have the following.
\begin{itemize}
\item If $p_\ing \ge H$ then $p_\outg \le L$.
\item If $p_\ing \le L$ then $p_\outg \ge H$.
\end{itemize}
\end{lemma}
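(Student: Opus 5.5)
The plan is to apply \cref{lem:not_upper_bound} and \cref{lem:not_lower_bound} with $r = \rnot = 2/11$ and $t = 4/11$. First I will check their hypothesis. We have $1 - 2t - \rnot - \eps = 1 - 8/11 - 2/11 - \eps = 1/11 - \eps$, which is positive because $\eps < 1/11$. So both lemmas are available.

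\emph{Case $p_\ing \ge H$.} Since $\hlow \le H \le p_\ing$, the quantity $(\hlow - p_\ing)$ is non-positive. The second term inside the maximum in \cref{lem:not_upper_bound} is therefore at most $0 < L$. Hence $p_\outg \le \max(L, \cdot) = L$, which is the first claim.

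\emph{Case $p_\ing \le L$.} Here I will use \cref{lem:not_lower_bound} and show that its second term is at least $\hhigh \ge H$. The minimum then equals $H$, giving $p_\outg \ge H$. The steps are as follows.
\begin{enumerate}
\item Bound the denominator. Using $\tmin = t - 5/k$, it equals $1 - 2t + 10/k - \rnot + \eps = 1/11 + \eps + 10/k$.
\item Substitute the parameters. We have $\eps = 1/11 - 4\delta/11$ and $k = 110/\delta$, so $10/k = \delta/11$. The denominator becomes $(2 - 3\delta)/11$, and the multiplicative factor $t/(\cdot)$ becomes $4/(2-3\delta) \ge 2$.
\item Bound the numerator. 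Since $p_\ing \le L \le \lhigh = s\hhigh/a \le \delta \hhigh/4$ (using $a \ge 4s/\delta$), and by \eqref{eq:low-over-high}, we get $\hlow - p_\ing \ge \hhigh(1 - 4/k - \delta/4)$.
\item Combine. Note that $\delta = \tfrac{11}{4}(1/11 - \eps) \le 1/4$ and $4/k = 4\delta/110$ is tiny. So $2(1 - 4/k - \delta/4) \ge 1$, and the lower-bound term is at least $\hhigh \ge H$.
\end{enumerate}

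The main obstacle is only the arithmetic in the second case. One has to make sure that the slack introduced by $\tmin$, by the gap between $\hlow$ and $\hhigh$, and by $L$ being nonzero is absorbed by the gap between $\eps$ and $1/11$. The choices $k = 110/\delta$ and $a \ge 4s/\delta$ are designed for exactly this, and the factor of roughly $2$ leaves ample room.
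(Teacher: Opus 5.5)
Your proof is correct and follows the paper's route. The first case comes straight from \cref{lem:not_upper_bound} because $\hlow - p_\ing \le 0$; in the second case you apply \cref{lem:not_lower_bound} and show that the factor $t/(1-2\tmin-\rnot+\eps)$, which is about $2$, absorbs the losses from $\tmin$, from $\hlow/\hhigh$, and from $L>0$. Your bookkeeping is a slightly cleaner version of the paper's arithmetic: you bound $\hlow - p_\ing$ against $\hhigh$ via \eqref{eq:low-over-high} and $s/a \le \delta/4$, whereas the paper rewrites everything in terms of $H$ and invokes $k \ge 32$.
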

\begin{proof}
For the first claim we can apply \cref{lem:not_upper_bound}
to obtain
\begin{align*}
p_\outg 
&\le \max \left(L, \; (\hlow - p_\ing) \cdot \frac{t}{( 1 - 2t - \rnot - \eps)} \right) \\
&\le \max \left(L, \; (\hlow - H) \cdot \frac{t}{( 1 - 2t - \rnot - \eps)} \right) \\
&\le \max \left(L, 0 \right) \\
&= L,
\end{align*}
where the second inequality uses the fact that $p_\ing \ge H$, and the third inequality used the fact that $\hlow - H < 0$, that $t > 0$,
and that $1 - \eps - 2t - \rnot = 1/11 - \eps > 0$.

For the second claim we can apply \cref{lem:not_lower_bound} to obtain
\begin{align*}
p_\outg 
&\ge \min \left( H, \;  (\hlow - p_\ing) \cdot \frac{t}{1 - 2\tmin - \rnot + \eps} \right) \\
&\ge \min \left( H, \;  (\hlow - \lhigh) \cdot \frac{t}{1 - 2\tmin - \rnot +
\eps} \right) 
\end{align*}
where the second inequality uses the fact that $p_\ing \le L \le \lhigh$.
\begin{align*}
(\hlow - \lhigh) \cdot \frac{t}{1 - 2\tmin - \rnot + \eps}
&\ge (H - 2s/k - s \cdot \hhigh / a) \cdot \frac{t}{1 - 2\tmin - \rnot + \eps} \\
&\ge (H - 2s/k - s \cdot (H + 2s/k) / a) \cdot \frac{t}{1 - 2\tmin - \rnot + \eps} \\
&\ge (H - H/k - s \cdot (H + H/k) / a) \cdot \frac{t}{1 - 2\tmin - \rnot + \eps} \\
&\ge (1-2/k-2/k^2) \cdot H \cdot \frac{t}{1 - 2\tmin - \rnot + \eps}. 
\end{align*}
The second inequality uses the fact that $\hlow \ge H - 2s/k$ which arises from
\cref{eq:high-minus-low} and that $\lhigh = s \cdot \hhigh / a$ by definition.
The third inequality uses the fact that $\hhigh \le H + 2s/k$ which again arises
from \cref{eq:high-minus-low}.
The fourth inequality uses the fact that $H = s \cdot p_\refg \le 2s$ from
\cref{lem:p_refg_bounds}.
The fifth inequality uses the fact that $s \leq 1/(20 k d |V|) \leq 1/k$ and
$a \ge 1$. 
Since $k \ge 3$, we have that
$(1-2/k-2/k^2) > 0$, and so we can continue with the following chain of
inequalities on the multiplier of $H$
\begin{align*}
\frac{t\cdot (1-2/k-2/k^2)}{1 - 2\tmin - \rnot + \eps} 
& \ge  \frac{4/11 \cdot (1-2/k-2/k^2)}{1 - 8/11+5/k - \rnot + \eps} \\
& \ge  \frac{4/11 \cdot (1-2/k-2/k^2)}{2/11+5/k} \\
& \ge 1.
\end{align*}
The first inequality used 
the fact that $\tmin \ge t - 5/k$ from \cref{lem:output_good_outside_gate}, the
fact that $t = 4/11$ by definition, the second inequality uses the fact that
$\rnot = 2/11$, and the final inequality uses the fact that $k \ge 32$.
So we have shown that $p_\outg \ge \max(H, H)$, and 
therefore we can conclude that $p_\outg \ge H$.
\end{proof}

\paragraph{\bf \NAND gates.}

For \NAND gates we simply need to verify that the gate works for pure inputs.
The following pair of lemmas shows that the \NAND gate construction correctly
implements the constraints of the \NAND gate.

\begin{lemma}
\label{lem:nand_one}
For each \NAND gate with inputs \inoneg and \intwog and output \outg, if
$p_\inoneg \ge H$ and $p_\intwog \ge H$, then $p_\outg \le L$. 
\end{lemma}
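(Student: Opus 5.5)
We mirror the proof of the first claim of \cref{lem:not_gate_correct}, i.e.\ the argument of \cref{lem:not_upper_bound} adapted to two inputs. Suppose for contradiction that $p_\outg > L$.

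First, \cref{lem:not_gate_big_l} applies to \NAND gates. It says the inverter must buy $t$ units of \inoneg and $t$ units of \intwog before spending anything on \outg or \refg. Since $p_\inoneg, p_\intwog \ge H$, this costs at least $2t\cdot H \ge 2t \cdot \hlow = e_\inverter$. There are two sub-cases.
\begin{itemize}
\item If $e_\inverter < t\,p_\inoneg + t\,p_\intwog$, the inverter exhausts its budget on the inputs.
\item In the boundary case $p_\inoneg = p_\intwog = H = \hlow$, the cost equals the budget exactly.
\end{itemize}
In either case the money left over for \outg is $t(\hlow - p_\inoneg) + t(\hlow - p_\intwog) \le 0$. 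So the inverter is allocated zero units of \outg.

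This contradicts \cref{lem:non_zero_output}, which states that the inverter of the gate outputting \outg must receive strictly more than $0$ units of it. Hence $p_\outg \le L$.

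Equivalently, we could redo the clearing argument directly. By \cref{lem:external_demand}, buyers other than the inverter take at most $2t + \rnand = 10/11 < 1-\eps$ units of \outg. The inverter would therefore need to spend a positive amount $p_\outg(1-\eps-2t-\rnand) > 0$ on \outg, which exceeds the leftover $t(2\hlow - p_\inoneg - p_\intwog) \le 0$.

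The only delicate point is the order of purchases in \cref{lem:not_gate_big_l}. It relies on $p_\outg > L$, which is exactly our contradiction hypothesis, and on $p_\inoneg, p_\intwog \le H$ from \cref{lem:p_j_bounds}. Together these guarantee that the marginal utility $a/p$ of the inputs strictly dominates that of both \outg and \refg until the anti-endowments are filled. Once this is in place, the budget accounting is routine, and I expect no real obstacle.
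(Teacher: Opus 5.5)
Your proof is correct and follows the paper's argument. You assume $p_{\outg} > L$, use \cref{lem:not_gate_big_l} to force both anti-endowment purchases first, observe that the leftover budget $2t\hlow - t\,p_{\inoneg} - t\,p_{\intwog} \le 0$ leaves nothing for \outg, and contradict \cref{lem:non_zero_output}; your split into strictly-insufficient and exactly-sufficient budget cases, and your inlined clearing argument, are just slightly more careful versions of the same steps.
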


\begin{proof}
Assume, for the sake of contradiction, that 
$p_\outg > L$. Then by \cref{lem:not_gate_big_l} we have that the inverter must
buy $t$ units of \inoneg and $t$ units of \intwog before buying any other good.
The inverter's remaining budget after buying $t$ units of both of the inputs goods is
\begin{align*}
e_\inverter - t \cdot p_\inoneg - t \cdot p_\intwog
&= 2t \cdot \hlow - t \cdot p_\inoneg - t \cdot p_\intwog \\
&\le 2t \cdot \hlow - t \cdot H - t \cdot H \\
&\le 0.
\end{align*}
Hence the inverter cannot spend any money on the output good, which
contradicts \cref{lem:non_zero_output}.
\end{proof}

\begin{lemma}
\label{lem:nand_two}
For each \NAND gate with inputs \inoneg and \intwog and output \outg, if
there exists an input good $j \in \{\inoneg, \intwog\}$
such that $p_j \le L$, then $p_\outg \ge H$. 
\end{lemma}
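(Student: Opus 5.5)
We argue by contradiction, mirroring the second half of \cref{lem:not_gate_correct} and the proof of \cref{lem:not_lower_bound}. Suppose $p_\outg < H$. Then the inverter's marginal utility for \outg is $s/p_\outg > s/H = 1/p_\refg$, so the inverter strictly prefers \outg to \refg. Since it is never satiated with \outg, it spends nothing on \refg. It also cannot buy more than $t$ units of either input, because its marginal utility for an input drops to $0$ once $t$ units are held. So all money not spent on the inputs must go to \outg.

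\textbf{Bounding spending on the inputs.} Let $j$ be the input with $p_j \le L \le \lhigh$, and let $j'$ be the other input. By \cref{lem:p_j_bounds} we have $p_{j'} \le H$, so the inverter spends at most $t\cdot H + t\cdot \lhigh$ on the inputs. It therefore spends at least
\[
2t\hlow - tH - t\lhigh \;=\; t\,(2\hlow - H - \lhigh)
\]
on \outg.

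\textbf{Bounding the quantity of \outg.} By \cref{lem:output_good_outside_gate}, buyers outside the gate buy at least $2\tmin$ units of \outg. By \cref{lem:auxiliary}, the gate's auxiliary buyer buys exactly $\rnand$ units. By $\eps$-clearing, the inverter therefore buys at most $1 - 2\tmin - \rnand + \eps$ units. Combining with the spending bound gives
\[
p_\outg \;\ge\; (2\hlow - H - \lhigh)\cdot\frac{t}{1 - 2\tmin - \rnand + \eps}.
\]

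\textbf{Showing the right-hand side is at least $H$.} This contradicts $p_\outg < H$. Using \cref{eq:high-minus-low}, $\hlow \ge H - 2s/k$ and $\hhigh \le H + 2s/k$. Using $H \ge s/2$ from \cref{lem:p_refg_bounds}, $s/k \le 2H/k$. Together with $\lhigh = s\hhigh/a$, $s \le 1/k$ and $a \ge 1$, these give
\[
2\hlow - H - \lhigh \;\ge\; (1 - c/k)\,H
\]
for a small absolute constant $c$. I estimate $c$ at around $8$ to $10$ and would compute it carefully. For the multiplier, substitute $t = 4/11$, $\tmin = t - 5/k$ and $\rnand = 2/11$, and bound $\eps < 1/11$. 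The denominator is then at most $1 - 8/11 + 10/k - 2/11 + \eps < 2/11 + 10/k$. The multiplier is thus at least $\frac{(4/11)(1-c/k)}{2/11+10/k}$, which is at least $1$ since $k = 110/\delta$ is large.

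\textbf{Main obstacle.} The only delicate step is this final constant check, which must hold for every admissible $\eps$ and $k$. The loss from using $2\hlow - H$ in place of $\hlow$ doubles the $1/k$ error, but the slack ratio $(4/11)/(2/11) = 2$ absorbs this easily for $k \ge 110$. If the crude bound $\eps < 1/11$ proves too lossy, I would instead keep $\eps = 1/11 - 4\delta/11$ explicitly and use $k = 110/\delta$.
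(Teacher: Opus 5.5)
Your proposal is correct and follows essentially the same argument as the paper. You assume $p_\outg < H$, note that the inverter spends nothing on \refg and at most $t(H + L)$ on the inputs, so at least $t(2\hlow - H - L)$ goes to \outg, and combine this with the $\eps$-clearing cap of $1 - 2\tmin - \rnand + \eps$ units on the inverter's share. The only difference is cosmetic: the paper compares quantities directly (the inverter must buy more than $3/11$ units but can buy fewer than $3/11$), while you rearrange into a lower bound on $p_\outg$ as in \cref{lem:not_gate_correct}. Your constant check goes through: $c \approx 9$, which needs roughly $k \ge 75$, and $k = 110/\delta \ge 440$.
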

\begin{proof}
Assume, for the sake of contradiction, that $p_\outg < H$. We have that the
marginal utility of buying the reference good is $1/p_\refg = s/H$, whereas the
marginal utility of the output good is $s / p_\outg > s/H$. Thus the inverter
strictly prefers the output good to the reference good. Since the inverter can
never be satiated by the output good, this means that the inverter cannot buy
the reference good. 

Furthermore, the inverter cannot buy more than $t$ units of either of the input
goods,
because the inverter's marginal utility becomes 0 once $t$ units have been
bought. 
This means that the inverter can spend at most $t \cdot p_\inoneg$ money on
\inoneg and $t \cdot p_\intwog$ on \intwog. The inverter's remaining budget
will therefore be at least
\begin{align*}
e_\inverter - t \cdot p_\inoneg - t \cdot p_\intwog
& = t \cdot (2\hlow - p_\inoneg - p_\intwog) \\
& \ge t \cdot (2\hlow - H - L) 
\end{align*}
money left over after buying all goods other than \outg, where the final
inequality has used the assumption that one of the two input goods $j$ satisfies $p_j
\le L$ while the other has price at most $H$ due to \cref{lem:p_j_bounds}.
This money must
therefore be spent on \outg. Therefore, the number of units of the output good bought by the inverter will be at least
\begin{align*}
t \cdot (2\hlow - H - L) / p_\outg
& \ge t \cdot (2\hlow - H - L) / H  \\
& \ge t \cdot (2\hlow - \hhigh - \lhigh) / \hhigh  \\
& \ge t \cdot (2\hlow - \hhigh - s \cdot \hhigh) / \hhigh  \\
& \ge t \cdot (2\hlow - 1.05 \hhigh) / \hhigh  \\
& \ge t \cdot (2\hlow/\hhigh - 1.05) \\
& \ge t \cdot (2 \cdot (1 - 4/k) - 1.05) \\
& > 0.75 t \\
& = 3/11.
\end{align*}
The first inequality uses the fact that $p_\outg \le H$ while the second
inequality uses the fact that $H \le \hhigh$ and $L \le \lhigh$.
The third inequality uses the fact that $\lhigh = s \cdot \hhigh / a$ by
definition, and since $a \ge 1$ we therefore have $\lhigh \le s \cdot \hhigh$.
The fourth inequality uses the fact that $s \leq 1/(20 k d |V|) \leq 1/20$.
The sixth inequality uses 
fact that $\hlow / \hhigh \ge 1 - 4/k$ from Equation \cref{eq:low-over-high},
and the final inequality uses the fact that $k > 40$ and the fact that $t =
4/11$.

By \cref{lem:output_good_outside_gate} we have that buyers who are external to
the \NAND gate will buy at least 2\tmin units of the output good, and by
\cref{lem:auxiliary} we have that $\rnand$ units of the output good will be bought
by the auxiliary buyer of the \NAND gate. Since the good must $\eps$-clear, this
means that the number of units of the output good that the inverter can buy is
at most 
\begin{align*}
1 - 2\tmin - \rnand + \eps 
& < 1 - 8/11 + 10/k - 2/11 + 1/11 \\
& = 2/11 + 10/k. \\
& < 3/11. 
\end{align*}
where the first inequality uses the fact that $\tmin \ge t - 5/k$ from 
\cref{lem:output_good_outside_gate}, that $\rnand = 2/11$, and that $\eps <
1/11$, while the second inequality uses the fact that $k > 110$.

So we have shown that the inverter must buy strictly more than 3/11 units of
the output good, but also can buy strictly less than 3/11 units of the output
good, so we have arrived at a contradiction. 
\end{proof}

\paragraph{\bf \PURIFY gates.}

We now prove the correctness of the \PURIFY gates.
For the proofs of this section, we will use the following auxiliary notation.
Let $A := \frac{1 - 2t - r - \eps}{t}$, $A' := \frac{1 - 2t - r' - \eps}{t}$, $B := \frac{1 - 2\tmin - r + \eps}{t}$, $B' := \frac{1 - 2\tmin - r' + \eps}{t}$, where $A, A', B, \in (0,1)$, and $0 < B' = 1 - 4/k - s/a < 1$.

In the following two lemmas and corollary, we prove a relationship between the
input price to a chain of \NOT gates and the output price. 

\begin{lemma}\label{lem:lower_bound_NOT_chain}
    Consider a chain of gates $\NOT(\ing, \midg, r)$, $\NOT(\midg, \outg, r')$. For any $P \in [L,H]$, if $p_{\ing} \geq \hlow \cdot \left( 1 - A \right) + P \cdot AB'$ then $p_{\outg} \geq P$.
\end{lemma}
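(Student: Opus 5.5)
The plan is to push the hypothesis on $p_\ing$ through the two \NOT gates. The first gate is handled with the upper bound of \cref{lem:not_upper_bound}, and the second with the lower bound of \cref{lem:not_lower_bound}. In the paper's notation, \cref{lem:not_upper_bound} applied to $\NOT(\ing,\midg,r)$ gives
\[ p_\midg \le \max\bigl(L,\; (\hlow - p_\ing)/A\bigr), \]
and \cref{lem:not_lower_bound} applied to $\NOT(\midg,\outg,r')$ gives
\[ p_\outg \ge \min\bigl(H,\; (\hlow - p_\midg)/B'\bigr). \]
Both lemmas require $1-2t-r-\eps>0$ (respectively with $r'$), which holds since $A,A'>0$.

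\textbf{Case 1: $p_\midg \le L$.} Then $(\hlow - p_\midg)/B' \ge (\hlow - \lhigh)/B'$. By the same computation as in \cref{lem:not_gate_correct}, this is at least $(1-2/k-2/k^2)H/B'$. Since $B' = 1-4/k-s/a$, this quantity should be at least $H$; I will verify this using $s\le 1/(20k)$, $a\ge 1$ and $k$ large. Concretely, I need $1-2/k-2/k^2 \ge 1-4/k-s/a$, i.e.\ $2/k \ge 2/k^2 - s/a$, which holds for $k\ge 1$. Hence $p_\outg \ge H \ge P$.

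\textbf{Case 2: $p_\midg > L$.} Then $p_\midg \le (\hlow - p_\ing)/A$. Substituting the hypothesis $p_\ing \ge \hlow(1-A) + P\,AB'$ gives
\[ \hlow - p_\ing \le \hlow A - P\,AB', \qquad\text{so}\qquad p_\midg \le \hlow - P B'. \]
Therefore $(\hlow - p_\midg)/B' \ge P$, and so $p_\outg \ge \min(H,P) = P$, using $P\le H$.

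The main obstacle is the boundary bookkeeping in Case 1. This is where $L$ versus $\lhigh$ matters and where the exact value of $B'$, namely $B' = 1-4/k-s/a$ as stated in the paper's notation, is needed to absorb the $\lhigh$ loss. If that inequality turns out to be too tight, the fallback is a cruder route: from $p_\midg \le L$ alone, derive directly that $p_\outg \ge H$, exactly as in the second claim of \cref{lem:not_gate_correct}, but with the $r'$-dependent denominator now written as $tB'$.
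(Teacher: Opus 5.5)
Your overall route is the paper's. The first gate is handled by \cref{lem:not_upper_bound}, the second by \cref{lem:not_lower_bound}, and your Case~2 algebra ($\hlow - p_{\ing} \le \hlow A - PAB'$, hence $p_{\midg} \le \hlow - PB'$, hence $p_{\outg} \ge \min(H,P) = P$) is exactly what the paper does. The paper, however, needs no case split. It shows $\hlow - PB' \ge L$ (from $B' \le \hlow/\hhigh - L/H$ and $P \le H \le \hhigh$), so $\max\bigl(L, (\hlow - p_{\ing})/A\bigr) \le \hlow - PB'$ and the $L$-branch of \cref{lem:not_upper_bound} is absorbed automatically.

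The step that does not hold up is the first inequality of your Case~1. You import the bound $\hlow - \lhigh \ge (1-2/k-2/k^2)H$ from \cref{lem:not_gate_correct}, but the computation there does not justify it. That computation replaces $2s/k$ by $H/k$, which would need $H \ge 2s$, whereas \cref{lem:p_refg_bounds} only gives $H \le 2s$; one can only say $2s/k \le 4H/k$. In fact the bound is false in the lowest copy. With $\hlow = s/2$ and $H$ at or just below $\hhigh = s/2 + 3s/(2k)$, one has $\hlow - \lhigh < s/2 < (1-2/k-2/k^2)H$.

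The conclusion of your Case~1 survives, but you should not aim for $p_{\outg} \ge H$ there. Instead, combine $B' \le 1 - 4/k - s/a$, $\hlow/\hhigh \ge 1-4/k$ from \eqref{eq:low-over-high}, $H \le \hhigh$ and $L = sH/a$ to get $HB' \le \hlow - L$. Then $p_{\midg} \le L$ gives $\hlow - p_{\midg} \ge \hlow - L \ge PB'$, and \cref{lem:not_lower_bound} yields $p_{\outg} \ge \min(H,P) = P$. This is precisely the paper's inequality $\hlow - PB' \ge L$, and with it your two cases merge into the paper's single argument.

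Note also that only the upper bound $B' \le 1-4/k-s/a$ is ever used, not an exact value. With $B'$ defined through $\tmin = t - 5/k$, the \PURIFY chains actually have $B' = 1 - 3\delta/4$ or $B' = 1/2 - 3\delta/4$, both strictly below $1-4/k-s/a$.
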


\begin{proof}
     Consider the gate $\NOT(\ing, \midg, r)$ with $p_{\ing} \geq \hlow \cdot \left( 1 - A \right) + P \cdot AB'$. This implies that $\hlow - p_{\ing} \leq \hlow \cdot A - P \cdot AB'$, or equivalently, $ (\hlow - p_{\ing})/A \leq \hlow - P \cdot B'$.
     Now notice that $\hlow - P \cdot B' \geq \hlow - P \cdot \left( \frac{\hlow}{\hhigh} - L/H \right) \geq L$,
     where the two inequalities come from the fact that $ B' = 1 - 4/k - s/a \leq \frac{\hlow}{\hhigh} - \frac{L}{H}$ and $P \leq H \leq \hhigh$, respectively.
     Therefore, \cref{lem:not_upper_bound} implies that $p_\midg \leq \hlow - P \cdot B'$.

    Now consider the gate $\NOT(\midg, \outg, r')$. From above, we have $\hlow - p_\midg \geq P \cdot B'$, which implies $(\hlow - p_\midg)/B' \geq P$. Since $P \leq H$, \cref{lem:not_lower_bound} implies that $p_\outg \geq P$.
\end{proof}

\begin{lemma}\label{lem:upper_bound_NOT_chain}
    Consider a chain of gates $\NOT(\ing, \midg, r)$, $\NOT(\midg, \outg, r')$. For any $P \in [L,H]$, if $p_{\ing} \leq \hlow \cdot \left( 1 - B \right) +  P \cdot A'B$ then $p_{\outg} \leq P$.
\end{lemma}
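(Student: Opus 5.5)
The plan mirrors the proof of \cref{lem:lower_bound_NOT_chain}, with the roles of \cref{lem:not_upper_bound} and \cref{lem:not_lower_bound} swapped. First, the hypothesis $p_\ing \le \hlow(1-B) + P\cdot A'B$ rearranges to $\hlow - p_\ing \ge \hlow B - P A'B$, that is,
\begin{equation*}
(\hlow - p_\ing)/B \ge \hlow - P A'.
\end{equation*}

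\textbf{Step 1: lower bound on the middle good.} Applying \cref{lem:not_lower_bound} to $\NOT(\ing,\midg,r)$ gives
\begin{equation*}
p_\midg \ge \min\bigl(H, (\hlow-p_\ing)/B\bigr) \ge \min\bigl(H, \hlow - PA'\bigr).
\end{equation*}
Since $\hlow \le H$ and $PA' \ge 0$, we have $\hlow - PA' \le H$, so $p_\midg \ge \hlow - PA'$. (If $\hlow - PA'$ happens to be negative, the bound still holds trivially because $p_\midg > 0$ by \cref{lem:p_j_bounds}.)

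\textbf{Step 2: upper bound on the output good.} Step 1 gives $\hlow - p_\midg \le PA'$, hence $(\hlow - p_\midg)/A' \le P$. Applying \cref{lem:not_upper_bound} to $\NOT(\midg,\outg,r')$, which requires $A' > 0$ (this is part of the standing assumptions on $A,A'$), yields
\begin{equation*}
p_\outg \le \max\bigl(L, (\hlow-p_\midg)/A'\bigr) \le \max(L,P) = P,
\end{equation*}
using $P \ge L$.

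\textbf{Main obstacle.} The one delicate point is the clipping at $H$ in \cref{lem:not_lower_bound}. We must check that $\hlow - PA' \le H$, so that the minimum resolves to the term we want. This follows immediately from $\hlow \le H$, so, unlike \cref{lem:lower_bound_NOT_chain}, no extra factor such as $B'$ is needed to absorb the gap between $\hlow$ and $\hhigh$. The only other care needed is the sign conventions when dividing by $A'$ and $B$, both of which are positive.
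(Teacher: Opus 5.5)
Your proof is correct and follows the paper's argument essentially step for step. Rearrange the hypothesis to $(\hlow - p_\ing)/B \ge \hlow - PA'$, use $\hlow - PA' \le \hlow \le H$ to resolve the minimum in \cref{lem:not_lower_bound} for the first gate, and then apply \cref{lem:not_upper_bound} to the second gate with $P \ge L$. Your parenthetical about a possibly negative $\hlow - PA'$ is harmless but unnecessary, since monotonicity of $\min(H,\cdot)$ already gives the bound.
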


\begin{proof}
     Consider the gate $\NOT(\ing, \midg, r)$ with $p_{\ing} \leq \hlow \left( 1 - B \right) +  P \cdot A'B$. 
     This implies that $\hlow - p_{\ing} \geq \hlow \cdot B - P \cdot A'B$, or equivalently, $ (\hlow - p_{\ing})/B \geq \hlow - P \cdot A'$.
     Notice that $\hlow - P \cdot A' \leq \hlow \leq H$.
     Therefore, \cref{lem:not_lower_bound} implies that $p_\midg \geq \hlow - P \cdot A'$.

    Now consider the gate $\NOT(\midg, \outg, r')$. As shown above, we have $\hlow - p_\midg \leq P \cdot A'$, which means that $(\hlow - p_\midg)/A' \leq P$. Since $P \geq L$, \cref{lem:not_upper_bound} implies that $p_\outg \leq P$.
\end{proof}

\begin{corollary}\label{cor:garbage_region}
    For some even $d \geq 2$, consider a chain made by connecting $d/2$ pairs of gates $\NOT(g_{j-1}, g_{j}, r)$, $\NOT(g_{j}, g_{j+1}, r')$, where $j \in \{ 1, 3, 5, \dots, d-1 \}$. Also, let $\ing := g_0$, and $\outg := g_d$ be the input and output of the chain, respectively.
    \begin{itemize}
        \item If $p_{\ing} \geq \hlow \cdot \frac{1-A}{1-AB'} + (AB')^{d/2} \cdot \left( H - \hlow \cdot \frac{1-A}{1-AB'} \right)$, then $p_{\outg} \geq H$, and
        \item if $p_{\ing} \leq \hlow \cdot \frac{1-B}{1-A'B} + (A'B)^{d/2} \cdot \left( L - \hlow \cdot \frac{1-B}{1-A'B} \right)$, then $p_{\outg} \leq L$.
    \end{itemize}
\end{corollary}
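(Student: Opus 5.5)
Both claims follow by iterating \cref{lem:lower_bound_NOT_chain} and \cref{lem:upper_bound_NOT_chain} backwards along the chain. For the first claim, define the affine map $f(P) := \hlow(1-A) + AB' \cdot P$. Then \cref{lem:lower_bound_NOT_chain}, applied to the pair $(g_{j-1}, g_j, g_{j+1})$, says: if $P \in [L,H]$ and $p_{g_{j-1}} \ge f(P)$, then $p_{g_{j+1}} \ge P$. Set $P_0 := H$ and $P_{m+1} := f(P_m)$. By induction on $m$, if $p_{g_{d-2m}} \ge P_m$ then $p_{\outg} \ge H$. The inductive step applies the lemma to the pair ending at $g_{d-2(m-1)}$ with $P = P_{m-1}$. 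Solving the affine recursion with fixed point $x^* = \hlow(1-A)/(1-AB')$ gives
\[
P_{d/2} = x^* + (AB')^{d/2}(H - x^*),
\]
which is exactly the threshold in the statement.

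The step that needs care is the hypothesis $P_m \in [L,H]$ for every $m < d/2$. Since $0 < AB' < 1$, the sequence $P_m$ moves monotonically from $P_0 = H$ toward $x^*$. It therefore lies between $x^*$ and $H$, and it suffices to show $L \le x^* \le H$. For the upper bound, $f(H) \le H$ is equivalent to $\hlow(1-A) \le H(1-AB')$, which holds because $\hlow \le H$ and $B' \le 1$; this gives $x^* \le H$, and monotonicity keeps the sequence in $[x^*, H]$. For the lower bound, $x^* \ge \hlow(1-A)$, and $1-A$ is a positive constant. With the paper's parameters, $A = (1/11-\eps)/t$ for $r = 0$ or $A = (-1/11-\eps)/t$ for $r = 2/11$; in the chains where $A$ is used as the first gate, we need $A \in (0,1)$, as assumed in the notation. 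Since $L = sH/a$ is tiny, we get $x^* \ge L$. If $x^* > H$ happened, we would only need the decreasing-toward-$H$ case, but the inequality above rules it out.

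The second claim is symmetric. Let $g(P) := \hlow(1-B) + A'B \cdot P$ and start from $P_0 := L$. \cref{lem:upper_bound_NOT_chain} gives: if $p_{g_{j-1}} \le g(P)$ with $P \in [L,H]$, then $p_{g_{j+1}} \le P$. The iterates $P_m$ increase monotonically from $L$ toward $y^* = \hlow(1-B)/(1-A'B)$, provided $y^* \ge L$. We also need $y^* \le H$, which follows from $1-B \le 1-A'B$, i.e.\ $A' \le 1$. So every $P_m \in [L, H]$, and the closed form $P_{d/2} = y^* + (A'B)^{d/2}(L - y^*)$ matches the statement.

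The main obstacle is verifying $y^* \ge L$ and, more generally, that the iterates never exit $[L,H]$. If $y^* < L$, the iterates would decrease below $L$ and the lemma's hypothesis would fail. In that case I would note that the condition $p_{\ing} \le P_{d/2} \le L$ can still be handled, by applying the lemma with $P = L$ directly, since $g(L) \ge P_m$ gives a weaker requirement. So I would clamp each $P_m$ to $[L,H]$ and use monotonicity of the lemma's conclusion in $P$.
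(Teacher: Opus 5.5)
Your backbone is the paper's proof: iterate \cref{lem:lower_bound_NOT_chain} (resp.\ \cref{lem:upper_bound_NOT_chain}) backwards from $P_0=H$ (resp.\ $P_0=L$) and solve the affine recursion. The paper never checks that the intermediate $P_m$ stay in $[L,H]$, so your extra care is warranted, and for the first bullet it goes through.

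The second bullet, however, contains a step that fails. You leave $y^*\ge L$ open and propose, if $y^*<L$, to clamp each $P_m$ to $L$. That cannot work. \cref{lem:upper_bound_NOT_chain} rests on \cref{lem:not_upper_bound}, which only gives $p_\outg\le\max(L,\cdot)$, so it can never certify a price below $L$. Applying it with $P=L$ to one pair yields only $p_{g_{j+1}}\le L$. But the next pair along the chain needs its input to satisfy $p_{g_{j+1}}\le g(L)$ before it can output a price at most $L$. Since $g(L)-L=(1-A'B)(y^*-L)$, this threshold is strictly below $L$ exactly when $y^*<L$. Monotonicity of the lemma's conclusion in $P$ goes the wrong way, so the chain breaks after a single pair.

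You must therefore rule out $y^*<L$ directly, which is a short computation with the paper's parameters. Since $0<A'B<1$ and $B<1$, we have $y^*\ge\hlow(1-B)$. Using $\tmin=t-5/k$, $k=110/\delta$ and $11\eps=1-4\delta$, one finds
\begin{itemize}
\item $B=1-\tfrac{3\delta}{4}$ when $r=0$ (chain 1), and
\item $B=\tfrac12-\tfrac{3\delta}{4}$ when $r=2/11$ (chain 2),
\end{itemize}
so $1-B\ge 3\delta/4$ in both chains. Moreover, $\hlow\ge(1-4/k)H$ by \eqref{eq:low-over-high}, and $L=sH/a\le\delta H/4$ because $a\ge 4s/\delta$. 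Hence $y^*\ge(1-4/k)\tfrac{3\delta}{4}H\ge\tfrac{\delta}{4}H\ge L$. In chain 1 the margin $1-B$ is only of order $\delta$, so ``$L$ is tiny'' has to be quantified against $\delta$.

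Finally, fix your values of $A$. Since $1-2t=3/11$, we get $A=(3/11-\eps)/t$ for $r=0$ and $A=(1/11-\eps)/t$ for $r=2/11$. Your value for $r=2/11$ is negative, which contradicts the assumption $A\in(0,1)$ that your monotonicity argument relies on, and chain 2 does start with $r=2/11$. The correct values give $1-A\ge 1/4$ in both chains, which makes your bound $x^*\ge\hlow(1-A)\ge L$ in the first bullet rigorous.
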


\begin{proof}
    By construction, the last pair's input serves as the output of the previous pair, and so on, until the second pair's input serves as the output of the first pair. Suppose now that we require the output of the chain to be at least $P \in [L,H]$. Then by repeatedly applying the bound of \cref{lem:lower_bound_NOT_chain}, i.e., substituting $P$ with $\hlow \cdot \left( 1 - A \right) + P \cdot AB'$ for $d/2$ times, and finally setting $P = H$, we get the first part of the statement. 
    
    In particular, to prove the first part of the statement, let the lower bound for the input of the $(d/2-i+1)$-st pair of \NOT gates be $P_i$, for $i \in \{ 1,\dots, d/2\}$. Now set $P_0 = H$, and observe that the first time we substitute (i.e., we consider only the last pair of gates), the lower bound for $p_\ing$ becomes $P_1 := \hlow \cdot \left( 1 - A \right) + P_0 \cdot AB'$ which agrees with the first part of the statement for $d=2$. We will use this as a base case. Now, given that for some $i \in \{ 1, \dots, d/2-1 \}$ it is $P_i := \hlow \cdot \frac{1-A}{1-AB'} + (AB')^i \cdot \left( P_0 - \hlow \cdot \frac{1-A}{1-AB'} \right)$, we will prove that if $P_{i+1} = \hlow \cdot \left( 1 - A \right) + P_{i} \cdot AB'$, then $P_{i+1} = \hlow \cdot \frac{1-A}{1-AB'} + (AB')^{i+1} \cdot \left( P_0 - \hlow \cdot \frac{1-A}{1-AB'} \right)$. Indeed, by this substitution we get
    \begin{align*}
        P_{i+1} &= \hlow \cdot \left( 1 - A \right) + \left( \hlow \cdot \frac{1-A}{1-AB'} + (AB')^i \cdot \left( P_0 - \hlow \cdot \frac{1-A}{1-AB'} \right) \right) \cdot AB' \\
        &= \hlow \cdot \left( 1 - A \right) \left( 1 + \frac{AB'}{1-AB'} \right) + (AB')^i \cdot \left( P_0 - \hlow \cdot \frac{1-A}{1-AB'} \right)  \cdot AB' \\
        &= \hlow \cdot \frac{1 - A}{1-AB'} + (AB')^{i+1} \cdot \left( P_0 - \hlow \cdot \frac{1-A}{1-AB'} \right) ,
    \end{align*}
    therefore, by induction on $i$ until $i=d/2$, we complete the first statement of the lemma, for $P_0 = H$.

    The second part of the proof is symmetric, using \cref{lem:upper_bound_NOT_chain} and $P_0 = L$, and we omit it.
\end{proof}

We now prove that the \PURIFY gate is correct.
Specifically, we consider a \PURIFY gate with an input good $\ing$ and two output goods $\outg 1$, $\outg 2$. It consists of two chains of \NOT gates; one, called \textit{chain 1}, with input $\ing$ and output $\outg 1$, and the other, called \textit{chain 2}, with input $\ing$ and output $\outg 2$. Each chain contains $d/2$ pairs of \NOT gates, where $d \geq 2 \ceil{\log_{2}(3/\delta)}$ is an even number. Each chain $i \in \{1,2\}$ has pairs of gates $\NOT(g^i_{j-1}, g^i_{j}, r^i_{j})$, $\NOT(g^i_{j}, g^i_{j+1}, r^i_{j+1})$, where the $\pairs$-th pair, $\pairs \in \{ 1, \dots, d/2 \}$ corresponds to the aforementioned $j$-th and $(j+1)$-st \NOT gates, with $j = 2\pairs-1$. Also, the first good of both chains is common, i.e., $g^1_0 = g^2_0$. Finally, chain 1 has $\ing := g^1_0$, $\outg 1 := g^1_d$, $r^1_j = 0$ for $j$ odd, and $r^1_{j} = 2/11$ for $j$ even; chain 2 has $\ing := g^2_0$, $\outg 2 := g^2_d$, $r^2_j = 2/11$ for $j$ odd, and $r^2_{j} = 0$ for $j$ even.
 
\begin{lemma}
\label{lem:purify_one}
    For each \PURIFY gate we have the following.
    \begin{itemize}
        \item If $p_\ing \geq H$, then $p_{\outg 1} \geq H$ and $p_{\outg 2} \geq H$.
        \item If $p_\ing \leq L$, then $p_{\outg 1} \leq L$ and $p_{\outg 2} \leq L$.
        \item If $p_\ing \in (L,H)$, then at least one of $p_{\outg 1}, p_{\outg 2}$ is outside of $(L,H)$.
    \end{itemize}
\end{lemma}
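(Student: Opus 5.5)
The plan is to treat the two pure cases with \cref{lem:not_gate_correct} and the garbage case with \cref{cor:garbage_region}, using the specific parameter choices of the two chains.

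\textbf{Pure inputs.} Every \NOT gate in either chain uses $r \in \{0, 2/11\}$. So $1 - 2t - r - \eps \ge 1/11 - \eps > 0$, and \cref{lem:not_upper_bound} and \cref{lem:not_lower_bound} apply to every gate. I will check that the proof of \cref{lem:not_gate_correct} goes through for both $r=0$ and $r=2/11$. The upper-bound half only needs positivity of the denominator. The lower-bound half needs $t(1-2/k-2/k^2)/(1-2\tmin - r + \eps)\ge 1$, and the case $r = 2/11$ is the worst one, which that lemma already handles. Hence each gate maps price $\ge H$ to price $\le L$ and price $\le L$ to price $\ge H$. Since $d$ is even, induction along each chain shows that $p_\ing \ge H$ gives $p_{\outg i}\ge H$, and $p_\ing \le L$ gives $p_{\outg i}\le L$, for $i=1,2$.

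\textbf{Garbage input.} Let $p_\ing\in(L,H)$. I will apply \cref{cor:garbage_region} to chain 1 with $(r,r')=(0,2/11)$ and to chain 2 with $(r,r')=(2/11,0)$.

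For chain 1, $A_1 = (1-2t-\eps)/t = (3/11-\eps)/t$ and $B'_1 = 1-4/k-s/a$. Since $(AB')^{d/2}\le (A_1)^{d/2}$ and $A_1 \le 3/4$, the threshold is $T_1 \le \hlow\frac{1-A_1}{1-A_1B'_1} + (3/4)^{d/2}H$.

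For chain 2, $A'_2 = (1-2t-\eps)/t$ and $B_2 = (1-2\tmin-2/11+\eps)/t \approx (1/11+\eps)/t$. Here I only need a lower bound, $T_2 \ge \hlow\frac{1-B_2}{1-A'_2B_2} - (A'_2B_2)^{d/2}\hlow$.

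The goal is then to show $T_1 \le T_2$. Once that holds, any $p_\ing$ satisfies $p_\ing\ge T_1$, so $p_{\outg 1}\ge H$, or $p_\ing < T_1 \le T_2$, so $p_{\outg 2}\le L$, which proves the third bullet.

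\textbf{Main obstacle.} The hard step is the inequality $T_1\le T_2$, which must hold with margin $\delta$. First I will compute the fixed points ignoring the $1/k$ and $s/a$ corrections. With $x := 1/11-\eps = 4\delta/11$ and $t=4/11$, I get $A_1 = 3/4 - \delta$, $B'_1\approx 1$, $A'_2 = 3/4-\delta$ and $B_2\approx 1/2 - \delta$. The fixed point of chain 1 is then $\hlow(1-A_1)/(1-A_1B'_1)\approx \hlow$ minus a term of order $(s/a + 4/k)$. That term is $O(\delta)$ because $a\ge 4s/\delta$ and $k=110/\delta$.

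Here I must be careful, since $1-A_1B'_1 \approx 1-A_1$ is of constant size $\approx 1/4$, so the ratio stays close to $1$. The fixed point of chain 2 is $\hlow(1/2+\delta)/(1-(3/4-\delta)(1/2-\delta))\approx \hlow\cdot\frac{1/2+\delta}{5/8+\Theta(\delta)}$, which is well below $\hlow$. I therefore expect $T_1\le T_2$ to fail unless I recheck which chain should go high. So the first thing I will try is to recompute the fixed points carefully, possibly swapping the roles of the chains or using the margin $\delta$ more precisely.

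The intended mechanism is that the chain 1 fixed point lies strictly below the chain 2 fixed point by $\Omega(\delta)\hlow$. The geometric error terms $(3/4)^{d/2}H \le (3/\delta)^{-1}$-scale, ensured by $d\ge 2\lceil\log_2(3/\delta)\rceil$, then fit inside that gap, with the $O(1/k)$ and $s/a$ errors absorbed using $k=110/\delta$ and $a\ge 4s/\delta$.
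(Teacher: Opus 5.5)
Your plan for the third bullet matches the paper's: apply \cref{cor:garbage_region} to both chains and show that chain~1's ``go high'' threshold ($R^U_1$ in the paper) is at most chain~2's ``go low'' threshold ($R^L_2$). That inequality is the whole content of the lemma, and you do not prove it. Your own computation predicts it fails, and you leave it open. The predicted failure comes from two wrong values. First, $A_1 = A'_2 = (3/11-\eps)\cdot\frac{11}{4} = \frac12+\delta$, not $\frac34-\delta$; you get $\frac34-\delta$ by putting $4\delta/11$ in place of $\eps$ rather than in place of $1/11-\eps$. Second, $B'_1$ is not $1-4/k-s/a$. That expression is only the upper bound on $B'$ used in the first step of \cref{lem:lower_bound_NOT_chain} to guarantee $\hlow - PB' \ge L$, so the ``$=$'' there should be read as ``$\le$''. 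The factor that actually propagates through \cref{lem:not_lower_bound} at the second gate, which has $r^1_2 = 2/11$, is $B'_1 = (1-2\tmin-2/11+\eps)/t = \frac12-\frac34\delta$. With the wrong value $B'_1\approx 1$, chain~1's fixed point sits near $\hlow$, which is why the picture looked inverted to you.

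The missing observation is that the two chains use the same two gate types in swapped order, so $A'_2 = A_1$ and $B_2 = B'_1$. Put $\rho := A_1B'_1 \le 9/32$. The fixed points $F_1 = \hlow\frac{1-A_1}{1-\rho}$ and $F_2 = \hlow\frac{1-B'_1}{1-\rho}$ then share a denominator, and
\[
R^L_2 - R^U_1 = (F_2-F_1)\bigl(1-\rho^{d/2}\bigr) - \rho^{d/2}(H-L),
\qquad
F_2-F_1 = \hlow\,\frac{A_1-B'_1}{1-\rho} \ge \tfrac74\,\delta\,\hlow .
\]
The last bound holds because $A_1-B'_1 = \bigl(\tfrac{2}{11}-2\eps-2(t-\tmin)\bigr)/t$, with $\tfrac{2}{11}-2\eps = \tfrac{8\delta}{11}$ and $2(t-\tmin) = 10/k = \delta/11$. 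This is exactly where $\eps<1/11$ is used. Since $\rho<\frac12$, the choice of $d$ gives $\rho^{d/2}\le\delta/3$. Together with $H-L\le 2\hlow$, the difference is at least $\delta\hlow\bigl(\tfrac74(1-\tfrac{\delta}{3})-\tfrac23\bigr)>0$.

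Your error budget would not close even with the correct gap. A contraction ratio of $\frac34$ only gives $(3/4)^{d/2}\le(\delta/3)^{\log_2(4/3)}$, which is not $O(\delta)$; the argument needs $\rho<\frac12$.

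In the pure-input part you have the dependence on $r$ backwards. The denominator $1-2\tmin-r+\eps$ is largest at $r=0$, so $r=0$ is the worst case for the lower-bound half of \cref{lem:not_gate_correct}, and that lemma does not cover it. The case does hold, because $\tfrac{3}{11}+\eps+\tfrac{10}{k} = \tfrac{4}{11}-\tfrac{3\delta}{11}\le t(1-2/k-2/k^2)$ for $k=110/\delta$, but you need to make this check. The paper avoids it by deriving the pure cases from $H\ge R^U_2>R^U_1$ and $L\le R^L_1<R^L_2$.
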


\begin{proof}
    First, we will need to calculate the quantities $A,B,A',B'$ for each chain, where we add a subscript $i \in \{1,2\}$ to indicate the chain they refer to. For ease of presentation, we define $\delta := \frac{11}{4} \cdot ( 1/11 - \eps ) > 0$, and notice that $\tmin \geq t - 5/k \geq t-\delta/22$. By construction, for chain 1, without loss of generality we can use only the first pair of \NOT gates (i.e., $j \in \{ 1, 2 \}$), and by the specified $r^i_{j}$ above, we get
     \begin{align*}
        A_1 &:= \frac{1-2t-r^1_1-\eps}{t} = \frac{3/11 - \eps}{4/11} \\
        B_1 &:= \frac{1-2\tmin-r^1_1+\eps}{t} \in \left[\frac{3/11 + \eps}{4/11}, \frac{3/11 + \eps}{4/11} + \delta  \right]  \\
        A'_1 &:= \frac{1-2t-r^1_2-\eps}{t} = \frac{1/11 - \eps}{4/11} \\
        B'_1 &:= \frac{1-2\tmin-r^1_2+\eps}{t} \in \left[\frac{1/11 + \eps}{4/11}, \frac{1/11 + \eps}{4/11} + \delta  \right]
    \end{align*}
    Similarly, for chain 2 we have 
    \begin{align*}
        A_2 &:= \frac{1-2t-r^2_1-\eps}{t} = \frac{1/11 - \eps}{4/11} \\
        B_2 &:= \frac{1-2\tmin-r^2_1+\eps}{t} \in \left[\frac{1/11 + \eps}{4/11}, \frac{1/11 + \eps}{4/11} + \delta  \right] \\
        A'_2 &:= \frac{1-2t-r^2_2-\eps}{t} = \frac{3/11 - \eps}{4/11} \\
        B'_2 &:= \frac{1-2\tmin-r^2_2+\eps}{t} \in \left[\frac{3/11 + \eps}{4/11}, \frac{3/11 + \eps}{4/11} + \delta  \right] \\
    \end{align*}
    Finally, notice that
        $B_2 \in [A_1 - 2\delta , A_1 -\delta]$, and $B'_1 \in [A'_2 - 2\delta , A'_2 -\delta]$.
    
    \cref{cor:garbage_region} implies that for chain 1 there is an interval $(R^L_1, R^U_1)$ of $p_\ing$ prices for which $p_\outg$ is not guaranteed to be at most $L$ or at least $H$. In particular, 
    \begin{align*}
        R^L_1 := \hlow \cdot \frac{1-B_1}{1-A'_1 B_1} + (A'_1 B_1)^{d/2} \cdot \left( L - \hlow \cdot \frac{1-B_1}{1-A'_1 B_1} \right),
    \end{align*}
    and 
    \begin{align*}
        R^U_1 := \hlow \cdot \frac{1-A_1}{1-A_1 B'_1} + (A_1 B'_1)^{d/2} \cdot \left( H - \hlow \cdot \frac{1-A_1}{1-A_1 B'_1} \right)
    \end{align*}
   Similarly, the corollary implies that, for chain 2, the respective interval is $(R^L_2, R^U_2)$ with
    \begin{align*}
        R^L_2 := \hlow \cdot \frac{1-B_2}{1-A'_2 B_2} + (A'_2 B_2)^{d/2} \cdot \left( L - \hlow \cdot \frac{1-B_2}{1-A'_2 B_2} \right),
    \end{align*}
    and 
    \begin{align*}
        R^U_2 := \hlow \cdot \frac{1-A_2}{1-A_2 B'_2} + (A_2 B'_2)^{d/2} \cdot \left( H - \hlow \cdot \frac{1-A_2}{1-A_2 B'_2} \right)
    \end{align*}

    We will now show that 
    \begin{align*}
        R^L_1 < R^U_1 \leq R^L_2 < R^U_2.
    \end{align*}
    First, by \cref{cor:garbage_region}, it is immediate that $R^L_1 < R^U_1$. Otherwise, $p_\ing = R^L_1$ would result into $p_{\outg 1} \leq L$, but because $R^L_1 \geq R^U_1$, we would have $p_{\outg 1} \geq H$. This implies $H \leq L = s \cdot H/a$, a contradiction, since $s < 1/2 < a$. Similarly, by \cref{cor:garbage_region}, it is straightforward that $R^L_2 < R^U_2$. 
    Now we will show that $R^U_1 \leq R^L_2$. We have

    \begin{align}
        R^U_1 &= \hlow \cdot \frac{1-A_1}{1-A_1 B'_1} + (A_1 B'_1)^{d/2} \cdot \left( H - \hlow \cdot \frac{1-A_1}{1-A_1 B'_1} \right) \nonumber \\
        &\leq \hlow \cdot \frac{1-B_2-\delta}{1 - A'_2 B_2} + (A'_2 B_2)^{d/2} \cdot \left( H - \hlow \cdot \frac{1-B_2-2\delta}{1-A'_2 B_2} \right) \nonumber \\
        & \qquad \qquad \qquad \qquad \qquad \qquad \qquad  \text{($A_1 \in [B_2 + \delta, B_2 + 2\delta ]$, $A_1 = A'_2 $, and $B'_1 = B_2$, by definition)}  \nonumber \\
        &= \hlow \cdot \frac{1-B_2}{1 - A'_2 B_2} + (A'_2 B_2)^{d/2} \cdot \left( L - \hlow \cdot \frac{1-B_2}{1-A'_2 B_2} \right) - \hlow \cdot \frac{\delta}{1 - A'_2 B_2}  \nonumber \\ 
        & \qquad \qquad \qquad \qquad \qquad \qquad \qquad \qquad \qquad \qquad \quad + (A'_2 B_2)^{d/2} \cdot \left( H - L + \hlow \cdot \frac{2\delta}{1-A'_2 B_2} \right) \nonumber \\
        &= R^L_2 - \hlow \cdot \frac{\delta}{1 - A'_2 B_2} + (A'_2 B_2)^{d/2} \cdot \left( H - L + \hlow \cdot \frac{2\delta}{1-A'_2 B_2} \right) \qquad \quad  \text{(by definition of $R^L_2$)} \nonumber \\
        &< R^L_2 - \hlow \cdot \delta + (A'_2 B_2)^{d/2} \cdot \left( 2\hlow + \hlow \right) \quad  \left( \text{$L \geq 0$, $0 \leq A'_2 B_2 < \frac{1}{2}$, $H \leq 2 \hlow$, and $\delta \leq \frac{1}{4}$} \right)  \nonumber \\
        &< R^L_2 - \hlow \cdot \left[\delta - 3 \cdot \left( \frac{1}{2} \right)^{d/2} \right] \qquad \qquad \qquad \qquad \qquad \qquad \qquad \quad  \left( \text{$0 \leq A'_2 B_2 \leq \frac{3}{4} \cdot \frac{1}{2} < \frac{1}{2}$} \right)  \nonumber \\
        &\leq R^L_2 \qquad \qquad \qquad \qquad \qquad \qquad \qquad \qquad \qquad \qquad \left( \text{ by our choice of $d \geq 2 \ceil{\log_{2}\left(\frac{3}{\delta}\right)}$ } \right). \nonumber
    \end{align}
    
    Now we are ready to prove the first part of the lemma's statement. Suppose $p_\ing \geq H$. Notice that 
    \begin{align*}
        H &= H \left(1 - (A_2 B'_2)^{d/2} \right) + H \cdot (A_2 B'_2)^{d/2} \\
        &\geq H \cdot \frac{1-A_2}{1-A_2 B'_2} \left(1 - (A_2 B'_2)^{d/2} \right) + H \cdot (A_2 B'_2)^{d/2} \qquad \qquad \qquad \qquad \qquad \qquad \text{($0 < A_2, B'_2 < 1$)} \\
        &\geq \hlow \cdot \frac{1-A_2}{1-A_2 B'_2} \left(1 - (A_2 B'_2)^{d/2} \right) + H \cdot (A_2 B'_2)^{d/2} \\
        &= \hlow \cdot \frac{1-A_2}{1-A_2 B'_2} + (A_2 B'_2)^{d/2} \cdot \left( H - \hlow \cdot \frac{1-A_2}{1-A_2 B'_2} \right) \\
        &= R^U_2 .
    \end{align*}
    Therefore, $p_\ing \geq R^U_2 > R^U_1$, and \cref{cor:garbage_region} implies that $p_{\outg 1} \geq H$, and $p_{\outg 2} \geq H$.

    Moving on to the proof of the second part of the statement, suppose $p_\ing \leq L$. We have 
    \begin{align*}
        L &= L \left(1 - (A'_1 B_1)^{d/2} \right) + L \cdot (A'_1 B_1)^{d/2} \\
        &\leq \frac{s\delta}{2} \cdot \left(1 - (A'_1 B_1)^{d/2} \right) + L \cdot (A'_1 B_1)^{d/2} \qquad \qquad \qquad \qquad \qquad \qquad\qquad \qquad \qquad \text{($L \leq s\delta/2 $)} \\
        &\leq \hlow \cdot \frac{1-B_1}{1-A'_1 B_1} \left(1 - (A'_1 B_1)^{d/2} \right) + L \cdot (A'_1 B_1)^{d/2} \quad  \text{($ \hlow \geq \hmin \geq s/2, \delta \leq 1, 0 < A'_1, B_1 < 1 $)} \\
        &= \hlow \cdot \frac{1-B_1}{1-A'_1 B_1} + (A'_1 B_1)^{d/2} \cdot \left( L - \hlow \cdot \frac{1-B_1}{1-A'_1 B_1} \right) \\
        &= R^L_1 .
    \end{align*}
    Therefore, $p_\ing \leq R^L_1 < R^L_2$, and so, \cref{cor:garbage_region} implies that $p_{\outg 1} \leq L$, and $p_{\outg 2} \leq L$.

    Now, for the third part of the statement, suppose that $p_{\ing} \in (L,H)$. If $p_{\ing} \leq R^L_2$, then by \cref{cor:garbage_region} we have that $p_{\outg 2} \leq L$. If $p_{\ing} > R^L_2$, then $p_{\ing} > R^U_1$ (since we showed that $R^L_2 \geq R^U_1$), and again from \cref{cor:garbage_region}, $p_{\outg 1} \geq H$. Therefore, if $p_\ing \in (L,H)$, then it cannot be that both $p_{\outg 1}$ and $p_{\outg 2}$ are in $(L,H)$.
\end{proof}

\section{Inapproximability for Arrow-Debreu Exchange Markets}

In this section we show that our inapproximability result also applies to Arrow-Debreu exchange markets.

\paragraph{\bf Exchange markets.} An exchange market is given by a tuple $\big(G,B,(w_{i,j})_{i\in B, j \in G},(u_i)_{i \in B}\big)$, where:
\begin{itemize}
\item $G$ is a set of (divisible) goods.
\item $B$ is a set of buyers (or traders).
\item For every $i \in B$ and $j \in G$, $w_{i,j} \geq 0$ is the endowment of good $j$ owned by buyer $i$. Without loss of generality, we assume that there is one unit of each good available, i.e., $\sum_{i \in B} w_{i,j} = 1$ for all $j \in G$.
\item For every $i \in B$, $u_i: \mathbb{R}_{\geq 0}^{|G|} \to \mathbb{R}_{\geq 0}$ is an SPLC utility function, as defined in the preliminaries.
\end{itemize}

\paragraph{\bf Optimal bundles.}
Given a price vector $p \in \mathbb{R}_{\geq 0}^{|G|}$, the set of optimal bundles for buyer $i$, denoted $\opt_i(p) \subseteq \mathbb{R}_{\geq 0}^{|G|}$, is the set of optimal solutions of the following optimization problem:
\begin{equation*}
\begin{split}
\max \quad &u_i(x_i) \\
\text{ s.t.} \quad  
& \sum_{j \in G} p_j x_{i,j} \leq \sum_{j \in G} p_j w_{i,j} \\
& x_{i,j} \geq 0 \quad \forall j \in G.
\end{split}
\end{equation*}
In other words, the budget of the buyer is the amount of money obtained by selling its endowment.

\paragraph{\bf Existence of equilibria.}
The definition of market equilibrium is identical to the one given for Fisher markets in the preliminaries.\footnote{This is sometimes called an $\eps$-tight market equilibrium; see the discussion in \cite{ChenPY17-non-monotone-markets}.} The following condition is sufficient to guarantee the existence of a market equilibrium for exchange markets \cite{Maxfield1997,vazirani2011market-plc-ppad}:

\begin{description}
    \item[Sufficient Condition:] The economy graph of the market is strongly connected. This graph is defined on the set of buyers $B$ by introducing a directed edge from buyer $i$ to buyer $i'$ if there exists a good $j \in G$ such that $w_{i,j} > 0$ and $u_{i',j}$ is a strictly increasing function.
\end{description}

\begin{theorem}
It is \ppad/-complete to compute an $\eps$-approximate market equilibrium in Arrow-Debreu exchange markets with SPLC utilities for any constant $\eps < 1/11$.
\end{theorem}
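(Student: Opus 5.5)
Membership in \ppad/ follows from \cite{vazirani2011market-plc-ppad}, which places exact equilibrium computation for SPLC exchange markets satisfying the sufficient condition in \ppad/. An exact equilibrium is an $\eps$-approximate one, so only hardness needs proof. For hardness I would use the standard reduction from Fisher markets to exchange markets and apply it to the Fisher markets built in the construction section, rather than to arbitrary Fisher markets. Those markets come with a convenient structure: every buyer has utility $x$ for \refg, and by \cref{lem:p_refg_bounds} the price $p_\refg$ lies in $[\prefmin,\prefmax]$ in any $\eps$-equilibrium.

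\textbf{Step 1: the exchange market.} Keep the same goods, buyers and utility functions. Give each buyer $i$ an endowment $w_{i,\refg} = e_i/E$ of the reference good, where $E = \sum_{i} e_i$, and nothing else. At any prices, buyer $i$'s budget is then $p_\refg e_i / E$. Equivalently, this is the original Fisher market with all budgets scaled by the common factor $p_\refg/E$.

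\textbf{Step 2: the sufficient condition.} Every buyer owns a positive amount of \refg, and every buyer's utility for \refg is strictly increasing. So the economy graph is complete and in particular strongly connected. The instance is therefore a valid input to the exchange-market problem.

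\textbf{Step 3: transferring solutions.} The Fisher optimisation problem \eqref{eq:opt-bundle} is invariant under scaling prices and budgets together. Hence $\opt_i(p)$ in the exchange market equals $\opt_i(p')$ in the Fisher market, where $p' = \lambda p$ with $\lambda = E/p_\refg$; this needs $p_\refg>0$, which I argue below. The clearing conditions depend only on $x$. It follows that $(p,x)$ is an $\eps$-equilibrium of the exchange market if and only if $(\lambda p, x)$ is an $\eps$-equilibrium of the Fisher market. Given an exchange $\eps$-equilibrium, we rescale the prices and then decode the \pcircuit solution exactly as in \cref{sec:analysis}. The correctness argument there applies verbatim to the rescaled Fisher equilibrium, and it gives \ppad/-hardness for every constant $\eps<1/11$. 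Both the reduction and the rescaling take polynomial time.

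\textbf{Main obstacle: the degenerate case $p_\refg = 0$.} Here all budgets vanish, so the budget-scaling argument breaks down. I would rule this case out directly. If $p_\refg=0$, the reference buyer has utility $x$ for the free good \refg and is never satiated, so $\opt_{\refb}(p)=\emptyset$, which is impossible at an equilibrium. Thus $p_\refg>0$, and $\lambda$ is well defined. The remaining point to check is that the budget scaling is by a single common factor: the ratio $e_i/E$ is fixed for every buyer, and this is exactly what makes Step 3 go through.
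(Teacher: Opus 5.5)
There is a genuine gap in Step 1. In the exchange-market model, goods enter the market only through endowments, and an instance must satisfy $\sum_{i\in B} w_{i,j} = 1$ for every $j\in G$. Your endowments give $\sum_i w_{i,\refg} = 1$, but $\sum_i w_{i,j} = 0$ for every $j \neq \refg$, so every variable-encoding good is owned by nobody.

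That is not a valid input, because a good with zero total endowment cannot be normalised to one unit. Under the unnormalised definition, the clearing condition for such a good is measured against a supply of $0$. It would then no longer coincide with the Fisher clearing condition, and Step 3's ``the clearing conditions depend only on $x$'' would be comparing two different sets of constraints.

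Put differently, your market violates Walras' law. Total budgets equal $p_\refg$, while the total value of the goods is $\sum_j p_j$, so the revenue from selling the non-reference goods goes to nobody. What you have built is a Fisher market with \refg playing the role of money, not an exchange economy.

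The fix is the folklore reduction the paper uses: give every buyer the share $w_{i,j} = e_i/E$ of \emph{every} good.
\begin{itemize}
\item Every good then has total endowment $1$, so the $\eps$-clearing conditions of the two markets are literally identical.
\item Buyer $i$'s budget is $(e_i/E)\sum_j p_j$. Rescaling prices by $E/\sum_j p_j$ makes every budget exactly $e_i$, so optimal bundles transfer. The rescaling is well defined because $\sum_j p_j > 0$ at any equilibrium: if all prices were zero, a buyer with a never-satiating good would have $\opt_i(p)=\emptyset$.
\item The economy graph is strongly connected because $w_{i,j}>0$ for all $i,j$ and each buyer has some strictly increasing $u_{i,j}$.
\end{itemize}
This version needs neither the reference good nor any property of the specific construction; it works for any Fisher market satisfying the sufficient condition. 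Your scaling argument and your treatment of the degenerate case otherwise carry over directly, with $\sum_j p_j$ in place of $p_\refg$.
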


\begin{proof}
To prove this result we use a simple folklore reduction from Fisher markets to exchange markets. Fix any $\eps < 1/11$ and let $(G,B,(e_i)_{i\in B},(u_i)_{i \in B})$ denote a Fisher market that satisfies the sufficient condition for Fisher markets. We construct the corresponding exchange market $(G,B,(w_{i,j})_{i\in B, j \in G},(u_i)_{i \in B})$, where the endowments are given by
$$w_{i,j} := \frac{e_i}{\sum_{k \in B} e_k}$$
for all $i \in B$ and $j \in G$. Note that have $\sum_{i \in B} w_{i,j} = 1$ for all $j \in G$, as desired. Furthermore, it is easy to see that the exchange market satisfies the sufficient condition, because we have $w_{i,j} > 0$ for all $i,j$ and moreover, by the sufficient condition for the Fisher market, for any $i \in B$ there exists a good $j \in G$ such that $u_{i,j}$ is a strictly increasing function. Thus, the economy graph is strongly connected.

Now consider any $\eps$-approximate market equilibrium $(p,x)$ of the exchange market. It is easy to see that the prices are invariant to scaling, so without loss of generality we can assume that
$$\sum_{j \in G} p_j = \sum_{i \in B} e_i.$$
As a result the budget available to buyer $i$ in the exchange market at prices $p$ is
$$\sum_{j \in G} p_j w_{i,j} = \sum_{j \in G} p_j \frac{e_i}{\sum_{k \in B} e_k} = e_i \frac{\sum_{j \in G} p_j}{\sum_{k \in B} e_k} = e_i.$$
But this is exactly the budget of buyer $i$ in the Fisher market, so it follows that bundle $x_i$ is also optimal for buyer $i$ in the Fisher market. Finally, since $\sum_{i \in B} w_{i,j} = 1$ for all $j \in G$, the $\eps$-clearing constraint in the exchange market and Fisher market are identical. It follows that $(p,x)$ is also an $\eps$-approximate market equilibrium for the Fisher market. Thus, finding an $\eps$-approximate market equilibrium $(p,x)$ of the exchange market is \ppad/-hard. Furthermore, membership in \ppad/ is known from \cite{vazirani2011market-plc-ppad}.
\end{proof}

\begin{acks}
Argyrios Deligkas was supported by EPSRC Grant EP/X039862/1 ``NAfANE: New Approaches for Approximate Nash Equilibria''.
John Fearnley was supported by EPSRC Grant EP/W014750/1
``New Techniques for Resolving Boundary Problems in Total Search''.
\end{acks}

\bibliographystyle{ACM-Reference-Format}
\bibliography{references}

\end{document}